\newtheorem{theorem}{Theorem}[section]
\newtheorem{lemma}[theorem]{Lemma}
\newtheorem{observation}[theorem]{Observation}
\newtheorem{claim}[theorem]{Claim}
\theoremstyle{definition}
\newtheorem*{problem}{Problem}
\theoremstyle{remark}
\newtheorem{remark}[theorem]{Remark}
\newcommand{\In}{\mathrm{in}}
\newcommand{\Out}{\mathrm{out}}
\newcommand{\End}{\mathrm{end}}
\title{Finding Spanning Trees with Perfect Matchings}
\author{
  Krist\'{o}f B\'{e}rczi\thanks{E\"{o}tv\"{o}s Lor\'{a}nd University, Hungary. \texttt{kristof.berczi@ttk.elte.hu}} \and \!\!\!\!
  Tam\'{a}s Kir\'{a}ly\thanks{E\"{o}tv\"{o}s Lor\'{a}nd University, Hungary. \texttt{tamas.kiraly@ttk.elte.hu}} \and \!\!\!\!
  Yusuke Kobayashi\thanks{Kyoto University, Japan. Email: \texttt{yusuke@kurims.kyoto-u.ac.jp}} \and \!\!\!\!
  Yutaro Yamaguchi\thanks{Osaka University, Japan. Email: \texttt{yutaro.yamaguchi@ist.osaka-u.ac.jp}} \and \!\!\!\!
  Yu Yokoi\thanks{Tokyo Institute of Technology, Japan. Email: \texttt{yokoi@c.titech.ac.jp}}}
\date{\empty}
\begin{document}
\maketitle
\thispagestyle{empty}

\begin{abstract}
We investigate the tractability of a simple fusion of two fundamental structures on graphs, a spanning tree and a perfect matching.
Specifically, we consider the following problem: given an edge-weighted graph, find a minimum-weight spanning tree among those containing a perfect matching.
On the positive side, we design a simple greedy algorithm for the case when the graph is complete (or complete bipartite) and the edge weights take at most two values.
On the negative side, the problem is NP-hard even when the graph is complete (or complete bipartite) and the edge weights take at most three values, or when the graph is cubic, planar, and bipartite and the edge weights take at most two values.

We also consider an interesting variant.
We call a tree strongly balanced if on one side of the bipartition of the vertex set with respect to the tree, all but one of the vertices have degree $2$ and the remaining one is a leaf.
This property is a sufficient condition for a tree to have a perfect matching, which enjoys an additional property. 
When the underlying graph is bipartite, strongly balanced spanning trees can be written as matroid intersection, and this fact was recently utilized to design an approximation algorithm for some kind of connectivity augmentation problem.
The natural question is its tractability in nonbipartite graphs.
As a negative answer, it turns out NP-hard to test whether a given graph has a strongly balanced spanning tree or not even when the graph is subcubic and planar.
\end{abstract}

\paragraph{Keywords:}
Algorithm, NP-hardness, Spanning tree, Perfect matching

\clearpage
\setcounter{page}{1}

\section{Introduction}
Spanning trees and perfect matchings are two of the most fundamental tractable structures in combinatorial optimization on graphs.
It is well-known that a minimum-weight spanning tree or perfect matching in an edge-weighted graph can be found in polynomial time~\cite{kruskal1956shortest,prim1957shortest,dijkstra1959note,edmonds1965maximum}.
These problems are foundational for the development of not only graph algorithms but also the theory of combinatorial optimization on more abstract targets such as matroids and discrete convex functions~(cf.~\cite{frank2011connections,murota2003discrete,fujishige2005submodular,schrijver2003combinatorial}).

In this paper, we investigate a kind of fusion of these two structures.
The simplest setting is to determine whether a given graph has a spanning tree containing a perfect matching or not.
This problem is easy, because a trivial necessary condition that the graph is connected and has a perfect matching is also sufficient (Observation~\ref{obs:PMST}).

The natural question is as follows: what about finding a minimum-weight such spanning tree in an edge-weighted graph?
We give a solid answer to this question (Theorem~\ref{thm:PMST}).
On the positive side, this optimization problem is tractable in the very restricted situation when the graph is a complete or complete bipartite graph and the edge weights are restricted to at most two values.
On the negative side, the problem is NP-hard even if the graph is a complete or complete bipartite graph and the edge weights are restricted to at most three values (say, $0$, $1$, or $2$), or if the graph is cubic, planar, and bipartite and the edge weights are restricted to at most two values (say, $0$ or $1$).

We also consider a variant of a spanning tree with a perfect matching.
A trivial necessary condition for a tree to have a perfect matching is that the bipartition of the vertex set is balanced;
that is, if we color the vertices with two colors so that any two adjacent vertices are colored differently, then the number of vertices colored with each of the two colors is the same.
This is clearly not sufficient, but it can be strengthened to become sufficient as follows.

We say that a tree is \emph{strongly balanced} if on one side of the bipartition, exactly one vertex is a leaf (that is of degree $1$) and all the other vertices are of degree $2$.
Norose and Yamaguchi~\cite{norose2024approximation} showed that a tree is strongly balanced if and only if it has a perfect matching and, for some leaf, a path from the leaf to any vertex in the tree is an alternating path with respect to the perfect matching.
This property was utilized to design a nontrivial approximation algorithm for a kind of connectivity augmentation problem~\cite{norose2024approximation} and travelling salesman problem~\cite{frank1998bipartite}.
An important fact is that, for a bipartite graph, strongly balanced spanning trees can be represented as the common bases of two matroids (Observation~\ref{obs:SBST}), which enables us to find a minimum-weight strongly balanced spanning tree in polynomial time with the aid of weighted matroid intersection algorihtms \cite{lawler1970optimal,lawler1975matroid,edmonds1979matroid,iri1976algorithm,brezovec1986two}.
Also, as pointed out in \cite{norose2024approximation}, it is interesting that this problem in fact commonly generalizes two fundamental special cases of weighted matroid intersection: finding a minimum-weight perfect matching in bipartite graphs and finding a minimum-weight arborescence in directed graphs.

The natural question, again, is as follows: what about the tractability of strongly balanced spanning trees in nonbipartite graphs?
We give a negative answer to this question (Theorem~\ref{thm:SBST}):
it is NP-hard to test whether a given graph has a strongly balanced spanning tree or not, even if the graph is subcubic and planar.

Problems of finding a spanning tree with additional constraints have been studied extensively, some of which were motivated by applications to communication networks.
For example, there are several studies on spanning trees with degree bounds \cite{Czumaj:Strothmann:ESA:1997,FURER1994409,DBLP:conf/focs/Goemans06,10.1145/2629366}
and spanning trees with small diameter~\cite{10.1016/0020-0190(94)00183-Y,HASSIN2004343,DBLP:journals/algorithmica/SpriggsKBSS04}.  
Our problems also align with this context. 
The additional condition of containing a perfect matching implies that each node can be paired with another to have mutual backup, expressing network robustness in a sense.
It is worth noting that the constraints of having partners and backups were studied 
also in the dominating set problem~\cite{DBLP:journals/networks/HaynesS98,DBLP:journals/jco/ChenLZ10,DBLP:journals/tcs/TripathiKPPW22}. 
Spanning trees with perfect matchings also appear in characterization of chemical structures \cite{wu2021graphs,vukiveevic2007anti}, and counting them in special graphs has recently been paid attention~\cite{ma2024enumeration}.

The rest of the paper is organized as follows.
In Section~\ref{sec:preliminaries}, we describe necessary definitions, and formally state the problems and our results.
In Section~\ref{sec:PMST}, we give a simple, polynomial-time algorithm for finding a minimum-weight spanning tree containing a perfect matching for the restricted inputs, and show the NP-hardness of the problem by a reduction from the Hamiltonian cycle problem.
In Section~\ref{sec:SBST}, we prove the NP-hardness of finding a strongly balanced spanning tree by a reduction from the planar 3-SAT problem.
In Section~\ref{sec:conclusion}, we conclude the paper with possible future work.

\section{Preliminaries}\label{sec:preliminaries}
\subsection{Definitions}
Let $G = (V, E)$ be a graph, which we assume to be simple and undirected unless otherwise specified.
We refer the readers to \cite{schrijver2003combinatorial} for basic concepts and notation on graphs.

An edge set $M \subseteq E$ is a \emph{matching} in $G$ if the edges in $M$ do not share their end vertices.
For a fixed matching, a vertex is said to be \emph{covered} (or \emph{matched}) if it is an end vertex of an edge in the matching, and \emph{exposed} otherwise.
A matching is \emph{perfect} if all the vertices are covered.
Let $\mathrm{def}(G)$ denote the \emph{deficiency} of $G$, which is defined as the minimum number of vertices exposed by a matching in $G$; in other words, it is the number of vertices minus twice the maximum cardinality of a matching.

A graph is said to be \emph{bipartite} if its vertex set admits a \emph{bipartition} $(V^+, V^-)$ such that every edge has one of its end vertices in $V^+$ and the other in $V^-$; possibly $V^+ = \emptyset$ or $V^- = \emptyset$.
We say that a bipartite graph (with a fixed bipartition) is \emph{balanced} if $|V^+| = |V^-|$.
Note that if a bipartite graph has a perfect matching, then it is balanced (regardless of the bipartition).

Let $T$ be a spanning tree of $G$.
A tree is bipartite, and let $(V_T^+, V_T^-)$ denote a bipartition of $V$ with respect to $T$, i.e., $V_T^+ \cup V_T^- = V$, $V_T^+ \cap V_T^- = \emptyset$, and every edge in $T$ connects a vertex in $V_T^+$ and one in $V_T^-$.
Such a bipartition is unique up to the symmetry of $V_T^+$ and $V_T^-$.
Note that if $T$ has a perfect matching, then it is unique and $T$ is balanced.
We say that $T$ is \emph{strongly balanced} if on one side of the bipartition, say $V_T^+$ by symmetry, exactly one vertex is a leaf (that is of degree $1$) and all the other vertices are of degree $2$.
Observe that if $T$ is strongly balanced, then $T$ is balanced.
As a relation with a perfect matching, the following characterization is known.

\begin{lemma}[Norose and Yamaguchi~\cite{norose2024approximation}]\label{lem:NY2024}
For a tree $T$, the following two statements are equivalent.
\begin{itemize}
\setlength{\itemsep}{0mm}
	\item $T$ is strongly balanced.
	\item $T$ has a perfect matching $M$, and there exists a leaf $r$ such that a path in $T$ from $r$ to any vertex alternates between the edges in $M$ and in $E \setminus M$.
\end{itemize}
\end{lemma}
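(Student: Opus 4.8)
The plan is to prove the two implications separately, in both cases by rooting the tree $T$ at the distinguished leaf $r$ and relating the matching to the parent--child structure of the resulting rooted tree. Throughout, write $A$ for the side of the bipartition $(V_T^+, V_T^-)$ that contains $r$ and $B$ for the other side; every edge of $T$ has exactly one endpoint in each of $A$ and $B$, and since $T$ has $|V|-1$ edges, summing degrees over $A$ gives $\sum_{v\in A}\deg_T(v) = |V|-1$.

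For the implication ``strongly balanced $\Rightarrow$ perfect matching with alternating paths'', assume $r$ is the unique leaf on its side and every other vertex of $A$ has degree $2$. First, $\sum_{v\in A}\deg_T(v) = 1 + 2(|A|-1) = 2|A|-1 = |V|-1$ forces $|A|=|B|$. Root $T$ at $r$; then each $v\in A$ has a unique child $c(v)\in B$ (for $v\ne r$ it is the neighbor other than the parent, for $v=r$ it is the unique neighbor), and distinct vertices of $A$ have distinct children since a vertex of $B$ has a unique parent. Hence $M:=\{\,vc(v): v\in A\,\}$ is a matching covering all of $A$, and by $|A|=|B|$ it covers all of $B$ as well, so $M$ is a perfect matching. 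For the alternation property, note that the unique $r$-to-$v$ path in the rooted tree descends monotonically, so its $i$-th edge joins the depth-$(i-1)$ vertex on the path to its child; this edge lies in $M$ exactly when that depth-$(i-1)$ vertex lies in $A$, i.e.\ exactly when $i$ is odd. Thus the path alternates, beginning with the $M$-edge incident to $r$.

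For the converse, assume $M$ is a perfect matching of $T$ and $r$ a leaf such that every $r$-to-$v$ path alternates. Since $r$ is covered by $M$ and has only one incident edge, that edge lies in $M$; hence on each $r$-to-$v$ path the $i$-th edge lies in $M$ iff $i$ is odd. Rooting at $r$ and letting $A,B$ be the even- and odd-depth classes, this translates to: the edge from $v$ to its parent lies in $M$ iff $v\in B$. I would then show that each $v\in A\setminus\{r\}$ has degree exactly $2$. Such a $v$ has a parent edge, which is not in $M$; since $v$ is covered by $M$, it is matched along an edge $vw$ to one of its children; and if $v$ had a second child $w'$, then $vw'\notin M$ because $M$ is a matching, yet $w'\in B$ would force its parent edge $vw'$ into $M$ --- a contradiction. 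So $v$ has exactly one child and degree $2$, while $r$ is a leaf, which is precisely the strong-balancedness condition for side $A$.

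The bulk of the argument is routine rooted-tree bookkeeping; the one point where the alternation hypothesis is genuinely used is the ``no second child'' step in the converse, and the only thing requiring care throughout is keeping straight that every edge of $M$ runs from its endpoint in $A$ (the parent) to its endpoint in $B$ (the child). I do not anticipate any real obstacle beyond this.
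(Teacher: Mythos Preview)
The paper does not actually contain a proof of this lemma; it is stated as a cited result from Norose and Yamaguchi~\cite{norose2024approximation} and used without proof. Your argument is a correct self-contained proof of the equivalence: the degree count forces $|A|=|B|$, the map $v\mapsto c(v)$ yields a perfect matching whose edges are exactly the parent--child edges with parent in $A$, and in the converse direction the alternation hypothesis pins down that every parent edge of a $B$-vertex lies in $M$, which cleanly rules out a second child for any $v\in A\setminus\{r\}$. No gaps.
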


Suppose that $G$ is associated with edge weight $w \colon E \to \mathbb{R}$.
We define the \emph{weight} of each edge set $F \subseteq E$ as $w(F) \coloneq \sum_{e \in F} w(e)$.

\subsection{Problems and Our Results}
We are now ready to state the problems.
In the following sections, those problems are referred to as their abbreviated names.

\smallskip

\begin{problem}[\textsc{Perfectly Matchable Spanning Tree (PMST)}]
\mbox{ }
\begin{description}
\setlength{\itemsep}{0mm}
\item[Input:] A graph $G = (V, E)$.
\item[Goal:] Decide whether $G$ has a spanning tree containing a perfect matching or not.
\end{description}
\end{problem}

\smallskip

\begin{problem}[\textsc{Minimum Perfectly Matchable Spanning Tree (MinPMST)}]
\mbox{ }
\begin{description}
\setlength{\itemsep}{0mm}
\item[Input:] A graph $G = (V, E)$ with edge weight $w \colon E \to \mathbb{R}$.
\item[Goal:] Minimize $w(T)$ subject to $T$ is a spanning tree of $G$ containing a perfect matching.
\end{description}
\end{problem}

\smallskip

\begin{problem}[\textsc{Strongly Balanced Spanning Tree (SBST)}]
\mbox{ }
\begin{description}
\setlength{\itemsep}{0mm}
\item[Input:] A graph $G = (V, E)$.
\item[Goal:] Decide whether $G$ has a strongly balanced spanning tree $T$ or not.
\end{description}
\end{problem}

\smallskip

\begin{problem}[\textsc{Minimum Strongly Balanced Spanning Tree (MinSBST)}]
\mbox{ }
\begin{description}
\setlength{\itemsep}{0mm}
\item[Input:] A graph $G = (V, E)$ with edge weight $w \colon E \to \mathbb{R}$.
\item[Goal:] Minimize $w(T)$ subject to $T$ is a strongly balanced spanning tree of $G$.
\end{description}
\end{problem}

\smallskip
The main results shown in this paper are summarized as follows.

\begin{theorem}\label{thm:PMST}
\mbox{ }
\begin{enumerate}
\setlength{\itemsep}{0mm}
	\item \textsc{MinPMST} can be solved in polynomial time if $G$ is a complete or complete bipartite graph and $|\mathrm{cod}(w)| \le 2$, where $\mathrm{cod}(w) \coloneq \{\, w(e) \mid e \in E \,\}$ denotes the codomain of $w$.
	\item \textsc{MinPMST} is NP-hard even if the input is restricted as follows:
	\begin{enumerate}
	\setlength{\itemsep}{1mm}
		\item $G$ is a cubic planar bipartite graph and $|\mathrm{cod}(w)| \le 2$.
		\item $G$ is a complete or complete bipartite graph and $|\mathrm{cod}(w)| \le 3$.
	\end{enumerate}
\end{enumerate}
\end{theorem}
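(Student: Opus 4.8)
Write $n:=|V|$. Every spanning tree has exactly $n-1$ edges, and $w$ takes two values, say $\alpha<\beta$ (if $|\mathrm{cod}(w)|\le 1$ the problem is trivial), so minimizing $w(T)$ over perfectly matchable spanning trees is the same as \emph{maximizing the number of weight-$\alpha$ edges}. Let $G_\alpha:=(V,E_\alpha)$ with $E_\alpha:=\{e\in E:w(e)=\alpha\}$, and call a forest $F\subseteq E_\alpha$ \emph{extendable} if some spanning tree of $G$ containing $F$ contains a perfect matching; this is downward closed under taking subforests. The algorithm is a Kruskal-type greedy: scan the weight-$\alpha$ edges in any order, maintaining a forest $F$, and add the current edge exactly when $F$ stays extendable; when no weight-$\alpha$ edge can be added, complete $F$ to a perfectly matchable spanning tree using weight-$\beta$ edges (possible since $F$ is extendable, and one checks every such completion must in fact use only weight-$\beta$ edges outside $F$). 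Two points need proof: (i) extendability of $F$ is testable in polynomial time — using the matching structure (Gallai--Edmonds decomposition) of $G_\alpha$ together with the fact that in a complete or complete bipartite graph the weight-$\beta$ edges give complete freedom both in joining the components of $F$ and in matching the vertices left exposed; and (ii) the greedy is optimal, which I would prove by an exchange argument, ideally by showing the extendable subforests of $G_\alpha$ form the independent sets of a matroid (which makes the greedy optimal automatically). I expect (i)--(ii) to be the real work: the easy lower bounds $c(G_\alpha)-1$ (every edge of $T$ between two components of $G_\alpha$ has weight $\beta$, and contracting the components makes $T$ connected on $c(G_\alpha)$ nodes, where $c(G_\alpha)$ is the number of components) and $\mathrm{def}(G_\alpha)/2$ (the weight-$\alpha$ part of a perfect matching is a matching of $G_\alpha$) are easy, but their maximum is \emph{not} the right value — small examples with a star-shaped component of $G_\alpha$ force strictly more weight-$\beta$ edges — so the exact min--max description must involve the matching structure of each component.

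\textbf{Part 2(a).} I would reduce from \textsc{Hamiltonian Path} (equivalently \textsc{Hamiltonian Cycle}), which remains NP-complete on suitably restricted graph classes. Given an instance $H$, construct $G$ by replacing each vertex of $H$ by a constant-size \emph{vertex gadget} that is itself cubic, planar and bipartite, carries three ``ports'' matched to the three edges at that vertex, and is attached so that $G$ as a whole stays cubic, planar and bipartite, with old edges of $H$ becoming port-to-port edges of $G$. With $0/1$ weights calibrated so that every spanning tree of $G$ has weight at least a fixed value $W$, the gadget is designed so that a spanning tree of weight exactly $W$ must ``pass through'' each gadget via exactly two ports (via exactly one for the two gadgets at the ends), whence the used old edges form a spanning, connected, maximum-degree-$2$ subgraph of $H$, i.e.\ a Hamiltonian path; and the parities inside the gadgets are arranged so that such a weight-$W$ spanning tree has a perfect matching precisely when this occurs. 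Conversely a Hamiltonian path of $H$ produces a perfectly matchable spanning tree of weight $W$, so $G$ has a perfectly matchable spanning tree of weight $\le W$ iff $H$ has a Hamiltonian path. The main obstacle is the gadget: it must be cubic, planar and bipartite, force the ``degree $\le 2$'' behaviour, and interact correctly with the perfect-matching requirement, all of which has to be certified by a case analysis of how an arbitrary spanning tree meets each gadget.

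\textbf{Part 2(b).} This follows by padding the instance of part 2(a). Let $G^\ast$ be the cubic planar bipartite graph with its $0/1$-weighting $w^\ast$ and threshold $W$ from part 2(a); then $W<n^\ast:=|V(G^\ast)|$, $n^\ast$ is even, and $G^\ast$ is connected with colour classes of equal size. Take the complete graph $K_{n^\ast}$ on $V(G^\ast)$ (or the balanced complete bipartite graph on its two colour classes), keep $w^\ast$ on the old edges, and give every new edge weight $n^\ast$. Any spanning tree using a new edge has weight $\ge n^\ast>n^\ast-1$, exceeding the weight of any spanning tree of $G^\ast$; hence the minimum weight of a perfectly matchable spanning tree of the padded graph equals that of $(G^\ast,w^\ast)$ when the latter is finite and is $\ge n^\ast$ otherwise. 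Thus the padded instance admits a perfectly matchable spanning tree of weight $\le W$ iff $(G^\ast,w^\ast)$ does iff $H$ has a Hamiltonian path, while the weight function uses only the three values $0,1,n^\ast$. (A direct reduction from \textsc{Hamiltonian Cycle} in the spirit of part 2(a) is also possible.)
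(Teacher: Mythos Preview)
Your proposal is a plan rather than a proof: in Parts~1 and~2(a) you explicitly defer the decisive steps.

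\textbf{Part 1.} You reduce to finding a maximum ``extendable'' forest in $G_\alpha$ and propose a Kruskal-style greedy, but then write ``I expect (i)--(ii) to be the real work'' and stop. Neither the polynomial-time extendability test nor the correctness of the greedy is established; ``ideally by showing the extendable subforests form the independent sets of a matroid'' is a hope, not an argument, and you give no exchange lemma either. The paper takes a different and self-contained route: it reformulates the problem as an augmentation problem (how many edges must be added to the weight-$\alpha$ subgraph $H$ to make it connected and perfectly matchable) and proves an explicit closed formula for the optimum in terms of $\mathrm{def}(H)$, the number $c_0(H)$ of components with a perfect matching, and the number $c_+(H)$ of deficient components. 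The proof is a direct case-analysis induction, and the formula immediately yields a simple greedy. You correctly sensed that $\max\{c(G_\alpha)-1,\ \mathrm{def}(G_\alpha)/2\}$ is not the answer, but you did not identify the right invariant.

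\textbf{Part 2(a).} You outline a reduction via vertex gadgets with three ports, but do not construct the gadget; you yourself call this ``the main obstacle.'' The paper's reduction is concrete and structurally different: it starts from \textsc{Hamiltonian Cycle} in cubic planar bipartite graphs and uses \emph{edge} gadgets---each edge of $G$ is replaced by two internally disjoint length-$3$ paths whose middle edges have weight~$1$, and then the six new neighbours of each original vertex are merged in pairs to restore $3$-regularity while keeping planarity and bipartiteness. The threshold is $|V|$, and the equivalence is proved by a short counting argument showing that any perfect matching in $\tilde G$ already forces $|V|$ edges of weight~$1$. Without an actual gadget and its analysis, your 2(a) is not a proof.

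\textbf{Part 2(b).} This part is correct and in the same spirit as the paper: pad the hard instance from 2(a) to a complete (or balanced complete bipartite) graph by assigning a third, prohibitively large weight to all new edges. The paper uses weight~$2$ together with the sharper observation above, whereas you use weight~$n^\ast$; both work, and your dependence on 2(a) is the only issue.
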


\begin{theorem}\label{thm:SBST}
\textsc{SBST} is NP-hard even if $G$ is restricted to a subcubic planar graph.
\end{theorem}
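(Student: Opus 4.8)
\textbf{Proof plan for Theorem~\ref{thm:SBST}.}

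The plan is to reduce from \textsc{Planar 3-SAT} (which is well known to be NP-hard). Given a 3-CNF formula $\varphi$ whose variable--clause incidence graph is planar, I would construct in polynomial time a subcubic planar graph $G_\varphi$ that has a strongly balanced spanning tree if and only if $\varphi$ is satisfiable. For the construction it is most convenient to work with the defining degree condition rather than Lemma~\ref{lem:NY2024}: a spanning tree $T$ is strongly balanced iff, for one side $A$ of its unique bipartition, every vertex of $A$ has degree $2$ in $T$ except a single leaf $r$; I will invoke Lemma~\ref{lem:NY2024} only when exhibiting the perfect matching in the ``yes'' direction. Note that $G_\varphi$ cannot be bipartite (otherwise the construction would solve an NP-hard problem in polynomial time, since in the bipartite case the problem reduces to matroid intersection, as discussed in the introduction), so the gadgets must contain odd cycles; the \emph{phase} of a gadget --- which of its two local colour classes plays the role of ``$A$'' --- will be the binary datum encoding a truth value.

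Second, I would design three gadget types and wire them along a planar embedding of the incidence graph of $\varphi$, as in standard planar-3-SAT reductions. (i) A \emph{variable gadget} $G_x$ for each variable $x$: a cyclic subcubic planar gadget that, inside any strongly balanced spanning tree, can be completed in essentially only two ways (``$x=\mathrm{true}$'' and ``$x=\mathrm{false}$''), the difference being visible at a family of \emph{ports}, one per occurrence of $x$. Since $x$ may occur in many clauses, $G_x$ is built around a long cycle carrying one degree-$3$ attachment vertex per occurrence, which keeps the maximum degree at $3$ and the gadget planar. (ii) A \emph{wire} joining a variable port to a clause port: a path whose internal vertices have degree $2$; in any spanning tree the whole path must be taken, so it transmits the colour/parity information between its ends up to a fixed, known shift, which I would make the identity by padding the path to the appropriate length. (iii) A \emph{clause gadget} $G_C$ for each clause $C=\ell_1\vee\ell_2\vee\ell_3$: a subcubic planar gadget with three ports such that the rest of $G_\varphi$ extends to a strongly balanced spanning tree only if at least one port arrives in the ``satisfied'' state, while any single satisfied port suffices. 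Finally, one pendant vertex attached to a fixed gadget is designated to host the unique $A$-leaf $r$.

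Third, I would verify both directions. For a ``yes'' instance, from a satisfying assignment I would write down $T$ explicitly: the truth-value configuration in each $G_x$, all edges of each wire, and in each $G_C$ the configuration licensed by one chosen satisfied literal; then check that the union is a spanning tree, that its global bipartition restricts on each gadget to the intended phase, that on side $A$ all degrees are $2$ except at $r$, and --- via Lemma~\ref{lem:NY2024} --- that the induced perfect matching has the required alternating-path structure from $r$. Conversely, from any strongly balanced spanning tree $T$ of $G_\varphi$ I would read off a truth assignment from the restriction of $(T, A)$ to each $G_x$, and argue that the local degree-$2$ constraints on $A$ force each wire and each gadget into one of the intended configurations, so the assignment is well defined and satisfies every clause.

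The main obstacle, and where most of the effort goes, is that ``strongly balanced'' is not a local property: it constrains a $2$-colouring of the \emph{entire} tree, odd cycles let the phase flip along long subpaths, and the spanning/acyclicity requirement couples distant gadgets. I therefore need the gadgets to be \emph{phase-robust} --- their feasible internal configurations must be exactly the intended ones no matter how the rest of the tree is coloured --- and I must rule out ``cheating'' spanning trees that route connectivity through a clause gadget in an unintended way or exploit an odd cycle to dodge a degree constraint. I would handle this by fixing a reference root and tracking, for every gadget, both the parity of its colouring and, at each degree-$3$ vertex, which incident edge is the one excluded by $T$; the per-gadget-type verification is then a finite case analysis, combined with a global counting argument (using $|V_T^+|=|V_T^-|$ and the leaf/degree accounting forced by subcubicity, namely that $T$ has $2+(\text{number of degree-}3\text{ vertices})$ leaves, all degree-$3$ vertices lying in $V_T^-$) to show that there is room for exactly one $A$-leaf, pinning down $r$ and excluding alternative colourings.
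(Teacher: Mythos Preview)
Your high-level strategy---reduce from \textsc{Planar 3-SAT} via variable gadgets, clause gadgets, and connectors, then argue both directions by local case analysis---is exactly what the paper does. But two design choices in your plan clash, and the paper's proof is largely about avoiding that clash.

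First, your wires are paths whose internal vertices have degree~$2$ in $G_\varphi$, so (as you note) every edge of every wire lies in every spanning tree. Second, you give no backbone connecting the variable gadgets to one another; the only inter-gadget edges you describe are the wires. Put these together: in any spanning tree $T$, every clause gadget is joined to three variable gadgets by three fully present wires, so the clause gadgets are \emph{forced} to carry all inter-variable connectivity. That is precisely the ``cheating'' behaviour you say you will rule out, and it is incompatible with a clause gadget whose feasible configurations are controlled purely by the truth value arriving on one port. Concretely, $T$ restricted to a clause gadget would have to be a spanning forest with three components (one per port) just to stay acyclic globally, and the shape of that forest would depend on which cycles of the incidence graph pass through this clause---a global dependency that defeats a per-gadget case analysis. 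The paper sidesteps this by (i) using Lichtenstein's variant of \textsc{Planar 3-SAT} in which a Hamiltonian cycle through the variables is part of the planar drawing, and stringing the variable gadgets along that cycle into a single backbone path; and (ii) making each variable--clause connector a \emph{single} optional edge, not a forced path. The heart of the paper's argument is then a careful case analysis showing that no clause gadget can contain a $T$-path between two of its neighbouring variable vertices, so the backbone alone carries connectivity and each clause hangs off it as a pendant subtree.

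A second, smaller gap: designating ``one pendant vertex'' as the unique $A$-leaf and hoping a global count pins it down is not enough---nothing you have built prevents the special leaf from landing elsewhere or the $2$-colouring from flipping. The paper attaches a small rigid tree with three leaves $s_1,s_2,s_3$ arranged so that $s_2,s_3$ are forced onto the same side; this makes $s_1$ the unique $V_T^+$-leaf and seeds an inductive propagation of the colouring along the backbone. Your plan needs a comparable forcing mechanism before the local gadget analysis can get off the ground.
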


\section{On Perfectly Matchable Spanning Trees (Proof of Theorem~\ref{thm:PMST})}\label{sec:PMST}
In this section, we prove Theorem~\ref{thm:PMST}.
We start with an easy observation, which immediately leads to the tractability of \textsc{PMST} with the aid of polynomial-time algorithms for finding a maximum matching in graphs~\cite{edmonds1965paths}; just find a perfect matching, and make it connected by adding edges between different connected components, one-by-one.

\begin{observation}\label{obs:PMST}
A graph $G$ has a spanning tree containing a perfect matching if and only if $G$ is connected and has a perfect matching.
\end{observation}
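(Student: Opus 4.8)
The plan is to prove the two directions separately; the statement is elementary, so the only real task is to make the natural construction precise.

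For the ``only if'' direction, suppose $T$ is a spanning tree of $G$ containing a perfect matching $M$. Since $T$ is spanning and connected, $G$ itself is connected, and since $M \subseteq E(T) \subseteq E$, the set $M$ is a perfect matching of $G$. Nothing further is required.

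For the ``if'' direction, assume $G$ is connected and fix a perfect matching $M$ of $G$ (which exists by hypothesis and can be found in polynomial time~\cite{edmonds1965paths}); in particular $|V|$ is even. Starting from the forest $F_0 \coloneq (V, M)$, which has exactly $|V|/2$ connected components, I would build a strictly increasing chain of forests $F_0 \subsetneq F_1 \subsetneq \dots$ as follows: while the current forest $F_i$ is disconnected, use the connectivity of $G$ to pick an edge $e \in E$ whose two end vertices lie in distinct components of $F_i$, and let $F_{i+1}$ be obtained from $F_i$ by adding $e$. Such an edge must exist, for otherwise the vertex set of some component of $F_i$ would have no edge of $G$ leaving it, contradicting that $G$ is connected. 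Adding an edge between two different components keeps the graph acyclic while decreasing the number of components by one, so after $|V|/2 - 1$ steps we obtain a connected, acyclic, spanning subgraph, i.e.\ a spanning tree $T$ of $G$ with $M \subseteq E(T)$.

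I do not expect any genuine obstacle here: the single point needing a sentence of justification is the existence, at each iteration, of an edge of $G$ crossing between two components of the current forest, and this follows immediately from connectivity. The same construction also yields the claimed polynomial-time algorithm for \textsc{PMST}: test connectivity of $G$, compute a maximum matching, verify that it is perfect, and, if so, run the above merging procedure to produce the desired spanning tree.
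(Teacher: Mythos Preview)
Your proof is correct and follows exactly the approach sketched in the paper: the paper simply remarks that one can ``find a perfect matching, and make it connected by adding edges between different connected components, one-by-one,'' which is precisely your iterative merging argument. There is nothing to add.
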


In Section~\ref{sec:PMST_tractable}, we prove Statement~1 by giving a solution to \textsc{MinPMST} when the graph is a complete or complete bipartite graph and there are at most two weight values.
In Section~\ref{sec:PMST_hard}, we prove Statement~2 by giving a reduction from the Hamiltonian cycle problem with an input restriction to \textsc{MinPMST} with the stated input restrictions.

\subsection{Tractable Case (Statement 1)}\label{sec:PMST_tractable}
Since all the spanning trees of a fixed graph have the same number of edges, the case when $|\mathrm{cod}(w)| = 1$ reduces to \textsc{PMST}.
Suppose that $|\mathrm{cod}(w)| = 2$.
Then, the objective is rephrased as minimizing the number of heavier edges.
Thus, without loss of generality, we assume that $\mathrm{cod}(w) = \{0, 1\}$.

Let us focus on the complete graph case (the complete bipartite graph case is almost the same; see Remark~\ref{rem:bipartite} at the end of this section).
Let $G = (V, E)$ be a complete graph with edge weight $w \colon E \to \{0, 1\}$, where we assume $|V|$ is even (otherwise, $G$ has no perfect matching).
Let $G_0 = (V, E_0)$ be the subgraph consisting of all the edges of weight $0$, i.e., $E_0 = \{\, e \in E \mid w(e) = 0 \,\}$.
Then, by Observation~\ref{obs:PMST}, the following augmentation problem with input $H = G_0$ is equivalent to \textsc{MinPMST} with the current restriction.

\begin{problem}[\textsc{Augmentation on PMST (AugPMST)}]
\mbox{ }
\begin{description}
\setlength{\itemsep}{0mm}
\item[Input:] A graph $H$.
\item[Goal:] Minimize the number of additional edges to make $H$ connected and perfectly matchable.
\end{description}
\end{problem}

For \textsc{AugPMST}, we show a complete characterization as follows, which leads to a solution to the original problem \textsc{MinPMST} when $G$ is a complete graph and $|\mathrm{cod}(w)| = 2$.
For a graph $H$, we denote by $\mathrm{opt}(H)$ the optimal value for the input $H$ of \textsc{AugPMST}.
Also, let $c(H)$ denote the number of connected components of $H$, and $c_0(H)$ and $c_+(H)$ denote the numbers of connected components $K$ with $\mathrm{def}(K) = 0$ and with $\mathrm{def}(K) > 0$, respectively.

\begin{lemma}\label{lem:PMST}
\mbox{ }
\begin{enumerate}
\setlength{\itemsep}{0mm}
	\item If $\mathrm{def}(H) = 0$, then $\mathrm{opt}(H) = c(H) - 1$.
	\item Suppose that $\mathrm{def}(H) > 0$.
	\begin{enumerate}
	\setlength{\itemsep}{1mm}
		\item If $\frac{1}{2}\mathrm{def}(H) < c_+(H)$, then $\mathrm{opt}(H) = c(H) - 1$.
		\item If $\frac{1}{2}\mathrm{def}(H) \ge c_+(H)$, then $\mathrm{opt}(H) = \frac{1}{2}\mathrm{def}(H) + c_0(H)$.
	\end{enumerate}
\end{enumerate}
\end{lemma}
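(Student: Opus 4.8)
The plan is to bound $\mathrm{opt}(H)$ from below by two independent obstructions and from above by explicit augmenting edge sets, and to check that these meet in each of the three cases of the statement. Throughout I use that $|V|$ is even, so $\mathrm{def}(H)$ is even.

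\emph{Lower bounds.} Connectivity alone forces $\mathrm{opt}(H)\ge c(H)-1$. The crux is a second bound: if $\mathrm{def}(H)>0$ then $\mathrm{opt}(H)\ge\tfrac12\mathrm{def}(H)+c_0(H)$. To prove it, fix an optimal edge set $F$ and a perfect matching $M'$ of $H+F$, and for each component $Q$ of $H$ let $z_Q$ be the number of vertices of $Q$ exposed by $M'\cap E(Q)$. Since edges of $H$ stay inside components, each such vertex is covered in $M'$ by an edge of $F$, and conversely every endpoint of an edge of $F\cap M'$ is such a vertex; hence $\sum_Q z_Q=2\,|F\cap M'|$. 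Now $z_Q\ge\mathrm{def}(Q)$ and $z_Q\equiv|V(Q)|\pmod2$. Call $Q$ \emph{self-sufficient} if $z_Q=0$; such a $Q$ has a perfect matching, so the number $s$ of self-sufficient components satisfies $s\le c_0(H)$, and every matched but non-self-sufficient component has $z_Q\ge2$. Therefore $\sum_Q z_Q\ge\mathrm{def}(H)+2(c_0(H)-s)$, i.e.\ $|F\cap M'|\ge\tfrac12\mathrm{def}(H)+c_0(H)-s$. On the other hand $|F\setminus M'|\ge s$: no edge of $F$ incident to a self-sufficient component lies in $M'$, so in the multigraph obtained from $H+F$ by contracting each component (edges of $F$ becoming loops or ordinary edges), contracting the set of non-self-sufficient components --- nonempty because $\mathrm{def}(H)>0$ --- into a single node yields a connected multigraph on $s+1$ nodes, hence at least $s$ non-loop edges, each incident to a self-sufficient component and thus arising from a distinct edge of $F\setminus M'$. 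Adding the two estimates gives the claim. Comparing $c(H)-1=c_+(H)+c_0(H)-1$ with $\tfrac12\mathrm{def}(H)+c_0(H)$ shows that the first bound dominates precisely when $\tfrac12\mathrm{def}(H)<c_+(H)$, matching the case split.

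\emph{Upper bounds.} If $\mathrm{def}(H)=0$, connect the components by any $c(H)-1$ edges; edge additions never destroy a perfect matching, so the union of the components' perfect matchings works. If $\mathrm{def}(H)>0$, in each deficient component $K_i$ fix a maximum matching $M_i$ exposing a set $S_i$ of $d_i:=\mathrm{def}(K_i)$ vertices (pairwise nonadjacent, else $M_i$ would not be maximum), and join the $\sum_i d_i=\mathrm{def}(H)$ exposed vertices by a new edge set $N$ with $|N|=\tfrac12\mathrm{def}(H)$ so that $\bigcup_iM_i\cup N$ is a perfect matching of the union of the deficient components; the freedom in choosing $N$ lets us prescribe the multigraph $N$ induces on the deficient components, subject only to its degree at $K_i$ being at most $d_i$. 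In case~2(b), $\tfrac12\mathrm{def}(H)\ge c_+(H)$, so $N$ can be chosen so that this multigraph is connected (it suffices to embed a spanning tree with degree $\le d_i$ at $K_i$, which exists since the positive $d_i$ sum to at least $2(c_+(H)-1)$); attaching each of the $c_0(H)$ perfectly matched components by one further edge gives $\tfrac12\mathrm{def}(H)+c_0(H)$ edges in total. In case~2(a), $\tfrac12\mathrm{def}(H)\le c_+(H)-1$, so instead $N$ is chosen so that the induced multigraph is a \emph{forest} (the sequence $(d_i)$ is a sequence of positive integers with even sum at most $2(c_+(H)-1)$, hence realizable as a forest degree sequence), which leaves exactly $c_+(H)-\tfrac12\mathrm{def}(H)$ connected pieces among the deficient components; joining these together with the $c_0(H)$ matched components costs $c(H)-\tfrac12\mathrm{def}(H)-1$ more edges, for a total of $c(H)-1$. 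In all cases the connecting edges are non-matching edges of the resulting spanning tree, so the perfect matching is preserved.

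\emph{Main obstacle.} The routine ingredients are the connectivity lower bound, the $\mathrm{def}(H)=0$ construction, and the standard facts about realizing degree sequences by trees and forests. The substance lies in the $+\,c_0(H)$ term: as a lower bound it cannot be obtained by tallying the deficiency and the components separately (indeed the inequality fails when $\mathrm{def}(H)=0$), so one must couple the matching deficit with the connectivity requirement through the self-sufficient components as above; and as an upper bound one must build $N$ so that its trace on the deficient components is simultaneously a perfect matching of the exposed vertices and the prescribed connected (resp.\ forest) pattern, while respecting the degree caps $d_i$.
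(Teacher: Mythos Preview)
Your argument is correct and genuinely different from the paper's. The paper proves Lemma~\ref{lem:PMST} by a descending induction on the number of edges of $H$: it tracks the triple $\varphi(H)=(\tfrac12\mathrm{def}(H),c_+(H),c_0(H))$, enumerates for each case all possible one-edge additions $H''$, and verifies by induction that $\mathrm{opt}(H)=1+\min_{H''}\mathrm{opt}(H'')$ matches the claimed formula. This yields the greedy algorithm described right after the lemma almost for free, but gives no structural reason for the shape of the formula.

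Your proof instead isolates two global lower bounds---the connectivity bound $c(H)-1$ and, when $\mathrm{def}(H)>0$, the bound $\tfrac12\mathrm{def}(H)+c_0(H)$---and matches them with explicit augmenting sets. The key new idea, absent from the paper, is the coupling argument for the second bound: splitting $F$ as $(F\cap M')\cup(F\setminus M')$, bounding $|F\cap M'|$ via the counts $z_Q$, and bounding $|F\setminus M'|$ via the spanning-tree count in the multigraph on self-sufficient components plus a contracted blob. This explains conceptually where the $+\,c_0(H)$ term comes from and why the bound fails at $\mathrm{def}(H)=0$. On the constructive side, your use of degree-sequence realizability (trees for case~2(b), forests with prescribed degree sequence for case~2(a)) gives the whole augmenting set at once rather than one edge at a time; it is worth noting that the forest realizability you invoke does hold, since with all $d_i\ge 1$ and $\sum d_i=2(c_+(H)-c)$ one always has at least $2c$ leaves to peel off in pairs. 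Both approaches are algorithmic, but yours is more informative about \emph{why} the optimum takes the stated value, while the paper's case analysis is more mechanical and more directly tied to the greedy procedure.
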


\begin{proof}
We show this lemma by induction on the number of edges in $H$ (in descending order).
The proof is based on case analysis with respect to $\varphi(H) \coloneq \left(\frac{1}{2}\mathrm{def}(H), c_+(H), c_0(H)\right)$.

We first observe all the possible changes of $\varphi(H)$ by adding an edge to $H$.
Suppose that $\varphi(H) = (i, j, k)$.
Let $H''$ be any graph obtained from $H$ by adding an edge.
Then, there are four possible cases as follows.
\begin{itemize}
\setlength{\itemsep}{0mm}
\item
Suppose that we add an edge between two vertices in the same connected component of $H$ such that every maximum matching covers at least one of the two vertices.
Then, $\varphi(H'') = (i, j, k)$.

\item
Suppose that we add an edge between two vertices in the same connected component $K$ of $H$ such that some maximum matching exposes both of the two vertices.
Then, $\varphi(H'') = (i - 1, j, k)$ or $(i - 1, j - 1, k + 1)$ (when $\mathrm{def}(K) > 2$ or $\mathrm{def}(K) = 2$, respectively).

\item
Suppose that we add an edge between two vertices in different connected components $K_1, K_2$ of $H$ such that every maximum matching covers at least one of the two vertices.
Then, $\varphi(H'') = (i, j - 1, k)$ or $(i, j, k - 1)$ (when $\min(\mathrm{def}(K_1), \mathrm{def}(K_2)) > 0$ or $\min(\mathrm{def}(K_1), \mathrm{def}(K_2)) = 0$, respectively).

\item
Suppose that we add an edge between two vertices in different connected components $K_1, K_2$ of $H$ such that some maximum matching exposes both of the two vertices.
Then, $\varphi(H'') = (i - 1, j - 2, k + 1)$ or $(i - 1, j - 1, k)$ (when $\mathrm{def}(K_1) = \mathrm{def}(K_2) = 1$ or $\max(\mathrm{def}(K_1), \mathrm{def}(K_2)) > 1$, respectively).
\end{itemize}

We then prove the statement for each case.

\medskip\noindent
(i)~ When $\varphi(H) = (0, 0, k)$ $(k \ge 1)$ (including the base case that $H$ is complete), we show $\mathrm{opt}(H) = k - 1$ (Statement~1).
In this case, $H$ has a perfect matching as $\mathrm{def}(H) = 0$, so it is clearly optimal (necessary and sufficient) to make $H$ connected by adding $k - 1$ edges.

\medskip\noindent
(ii)~ When $\varphi(H) = (1, 1, k)$ $(k \ge 0)$, we show $\mathrm{opt}(H) = k + 1$ (Statement~2b).
In this case, $H$ has a unique connected component $K$ with $\mathrm{def}(K) > 0$, which satisfies $\mathrm{def}(K) = 2$.

Let $H'$ be the graph obtained from $H$ by adding an edge between two vertices such that some maximum matching exposes both of them, which are in the same connected component $K$.
Then, $\varphi(H') = (0, 0, k + 1)$, and $\mathrm{opt}(H') = k$ more edges are sufficient by induction hypothesis (Statement~1), and hence $\mathrm{opt}(H) \le \mathrm{opt}(\varphi(H')) + 1 = k + 1$.

Let $H''$ be any graph obtained from $H$ by adding an edge.
Following the observation at the beginning, we have $\varphi(H'') = (1, 1, k)$, $(0, 0, k + 1)$, or $(1, 1, k - 1)$, in which we have $\mathrm{opt}(H'') = k + 1$, $k$, or $k$, respectively, by induction hypothesis (Statement~1 or 2b), and hence $\mathrm{opt}(H) \ge \min_{H''} \mathrm{opt}(\varphi(H'')) + 1 = k + 1$.

\medskip\noindent
(iii)~ When $\varphi(H) = (i, 1, k)$ $(i \ge 2,\ k \ge 0)$, we show $\mathrm{opt}(H) = k + i$ (Statement~2b).
In this case, $H$ has a unique connected component $K$ with $\mathrm{def}(K) > 0$, which satisfies $\mathrm{def}(K) = 2i$.
This case is almost the same as the previous one.
We have $\varphi(H') = (i - 1, 1, k)$ and $\varphi(H'') = (i, 1, k)$, $(i - 1, 1, k)$, or $(i, 1, k - 1)$.
By induction hypothesis (Statement~2b), we obtain $\mathrm{opt}(H) \le k + i$ and $\mathrm{opt}(H) \ge k + i$ as with the previous case.

\medskip\noindent
(iv)~ When $\varphi(H) = (i, 2i, k)$ $(i \ge 1,\ k \ge 0)$, we show $\mathrm{opt}(H) = k + 2i - 1$ (Statement~2a).
In this case, $H$ has $2i$ connected components, each of which has deficiency exactly $1$.

Let $H'$ be the graph obtained from $H$ by adding an edge between two vertices such that some maximum matching exposes both of them, which are in different connected components $K_1, K_2$ with $\mathrm{def}(K_1) = \mathrm{def}(K_2) = 1$.
Then, $\varphi(H') = (i - 1, 2(i - 1), k + 1)$.
By induction hypothesis (Statement~1 or 2a), $\mathrm{opt}(H') = k + 2i - 2$ more edges are sufficient, and hence $\mathrm{opt}(H) \le k + 2i - 1$.

Let $H''$ be any graph obtained from $H$ by adding an edge.
Following the observation at the beginning, we have $\varphi(H'') = (i, 2i, k)$, $(i, 2i - 1, k)$, $(i, 2i, k - 1)$, or $(i - 1, 2(i - 1), k + 1)$, in which we have $\mathrm{opt}(H'') = k + 2i - 1$ or $k + 2i - 2$ by induction hypothesis (Statement~1 or 2a), and hence $\mathrm{opt}(H) \ge k + 2i - 1$.

\medskip\noindent
(v)~ The remaining case is when $\varphi(H) = (i, j, k)$ $(i \ge 2,\ 1 < j < 2i,\ k \ge 0)$.
In this case, $H$ has two different connected components $K_1, K_2$ with $\mathrm{def}(K_1) \ge 1$ and $\mathrm{def}(K_2) \ge 2$.
Note that there are two possible cases: $i < j$ (Statement~2a) and $i \ge j$ (Statement~2b).

Let $H'$ be the graph obtained from $H$ by adding an edge between two vertices such that some maximum matching exposes both of them, one of them is in $K_1$, and the other is in $K_2$.
Then, the resulting component of $H'$ still has a positive deficiency, and hence $\varphi(H') = (i - 1, j - 1, k)$.
No matter in which case we are (i.e., $i < j$ or $i \ge j$), the relation between $i$ and $j$ is preserved and we have $\frac{1}{2}\mathrm{def}(H') = \frac{1}{2}\mathrm{def}(H) - 1$, $c(H') = c(H) - 1$, and $c_0(H') = c_0(H)$.
Thus, by induction hypothesis (Statement 2a or 2b, respectively), the stated number of additional edges in total is indeed sufficient.

Let $H''$ be any graph obtained from $H$ by adding an edge.
Following the observation at the beginning, we have $\varphi(H'') = (i, j, k)$, $(i - 1, j, k)$, $(i - 1, j - 1, k + 1)$, $(i, j - 1, k)$, $(i, j, k - 1)$, $(i - 1, j - 2, k + 1)$, or $(i - 1, j - 1, k)$.
In any case, by induction hypothesis, the stated number of additional edges is necessary as follows, which completes the proof.
\begin{itemize}
\setlength{\itemsep}{0mm}
\item
In any case, we have $\frac{1}{2}\mathrm{def}(H'') + c_0(H'') \ge \frac{1}{2}\mathrm{def}(H) + c_0(H) - 1$ and $c(H'') \ge c(H) - 1$.
Thus, if $H$ and $H''$ are in the same situation (which is applied, Statement 2a or 2b), then the consequence is clear.

\item
Suppose that Statement~2a is applied to $H$ and Statement~2b to $H''$.
In this case, we have $j = i + 1$ and $\varphi(H'') = (i - 1, j - 2, k + 1)$ or $(i, j - 1, k)$.
Then, $\mathrm{opt}(H'') = \frac{1}{2}\mathrm{def}(H'') + c_0(H'') = i + k = j + k - 1 = c(H) - 1$, which means that $c(H) \ (\ge c(H) - 1)$ additional edges in total are necessary in this case.

\item
Suppose that Statement~2b is applied to $H$ and Statement~2a to $H''$.
In this case, we have $j = i$ and $\varphi(H'') = (i - 1, j, k)$.
Then, $\mathrm{opt}(H'') = c(H'') - 1 =  j + k - 1 = i + k - 1 = \frac{1}{2}\mathrm{def}(H) + c_0(H) - 1$, which means that $\frac{1}{2}\mathrm{def}(H) + c_0(H)$ additional edges in total are necessary in this case. \qedhere
\end{itemize}
%
\end{proof}

The proof (the definition of $H'$ in each case) gives a simple greedy algorithm for \textsc{AugPMST} as follows.

\begin{description}
\setlength{\itemsep}{0mm}
\item[Step 0.] Set $\tilde{H} \leftarrow H$, and find a maximum matching $\tilde{M}$ in $\tilde{H}$.
\item[Step 1.] While there exist two different connected components $K_1, K_2$ of $\tilde{H}$ with $\mathrm{def}(K_1) \ge 1$ and $\mathrm{def}(K_2) \ge 2$ (in Case (v)), pick two vertices $v_1, v_2$ exposed by $\tilde{M}$ such that $v_1$ is in $K_1$ and $v_2$ is in $K_2$, and add an edge $\{v_1, v_2\}$ to $\tilde{H}$ and $\tilde{M}$.
\item[Step 2.] While there exist two different connected components $K_1, K_2$ of $\tilde{H}$ with $\mathrm{def}(K_1) = \mathrm{def}(K_2) = 1$ (in Case (iv)), pick two vertices $v_1, v_2$ exposed by $\tilde{M}$ such that $v_1$ is in $K_1$ and $v_2$ is in $K_2$, and add an edge $\{v_1, v_2\}$ to $\tilde{H}$ and $\tilde{M}$.
\item[Step 3.] While $\mathrm{def}(\tilde{H}) > 0$ (in Case (ii) or (iii)), pick two vertices $v_1, v_2$ exposed by $\tilde{M}$ (which are in the same connected component of $\tilde{H}$ by Steps 1 and 2), and add an edge $\{v_1, v_2\}$ to $\tilde{H}$ and $\tilde{M}$.
\item[Step 4.] While $\tilde{H}$ is not connected (in Case (i)), pick two vertices $v_1, v_2$ in different connected components of $\tilde{H}$, and add an edge $\{v_1, v_2\}$ to $\tilde{H}$.
\end{description}

The bottleneck of its computational time is usually finding a maximum matching in $H$, and the other parts are simply implemented in linear time.
Thus, it runs in $\mathrm{O}(\min(\sqrt{n}m + n, n^\omega))$ time \cite{micali1980v,harvey2009algebraic,vazirani2024theory}, where $n$ and $m$ are the numbers of vertices and edges in $H$, respectively, and $\omega < 2.372$ is the matrix multiplication exponent~\cite{duan2023faster}.
For the original problem \textsc{MinPMST} when $G$ is a complete graph and $|\mathrm{cod}(w)| = 2$, if the input is given by specifying which edges have the smaller weight, it runs in $\mathrm{O}(\min(\sqrt{n}m_0 + n, n^\omega))$ time, where $n$ is the number of vertices and $m_0$ is the number of edges having the smaller weight.

\begin{remark}\label{rem:bipartite}
Lemma~\ref{lem:PMST} holds as it is if the underlying graph (i.e., $H$ together with all possible additional edges) is a balanced complete bipartite graph, where the balancedness is necessary to admit a perfect matching.
Also, the above greedy algorithm (with appropriate choices of $v_1$ and $v_2$ as well as $K_1$ and $K_2$ in Steps~1--4) works for \textsc{MinPMST} when $G$ is a balanced complete bipartite graph and $|\mathrm{cod}(w)| = 2$.
\end{remark}

\subsection{NP-Hardness (Statement 2)}\label{sec:PMST_hard}
We refer to a cycle $C$ in a graph as a sequence of vertices $(v_1, v_2, \dots, v_\ell, v_1)$, where the vertices $v_1, v_2, \dots, v_\ell$ $(\ell \ge 3)$ are all distinct and in the graph there exists an edge $\{v_i, v_{i+1}\}$ for each $i = 1, 2, \dots, \ell - 1$ as well as an edge $\{v_\ell, v_1\}$.
A cycle is called \emph{Hamiltonian} if it contains all the vertices in the graph.
The following problem is one of the most fundamental NP-complete problems, which is NP-complete even for cubic planar bipartite graphs.

\begin{problem}[\textsc{Hamiltonian Cycle (HC)}]
\mbox{ }
\begin{description}
\setlength{\itemsep}{0mm}
\item[Input:] A graph $G = (V, E)$.
\item[Goal:] Test whether $G$ has a Hamiltonian cycle or not.
\end{description}
\end{problem}

\begin{theorem}[Akiyama, Nishizeki, and Saito~\cite{akiyama1980np}]
\textsc{HC} is NP-complete even if $G$ is restricted to a cubic planar bipartite graph.
\end{theorem}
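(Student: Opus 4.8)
The plan is to prove NP-completeness in two parts. Membership in NP is immediate: a Hamiltonian cycle is a polynomial-size certificate, and checking that a proposed cyclic ordering of $V$ uses only edges of $G$ and visits every vertex exactly once takes linear time; this holds in particular when $G$ is cubic, planar, and bipartite. For hardness I would reduce from the Hamiltonian cycle problem on cubic planar graphs, which is NP-complete by the classical result of Garey, Johnson, and Tarjan. Given such a graph $G$, the goal is to build in polynomial time a cubic planar bipartite graph $G'$ that is Hamiltonian if and only if $G$ is; the entire difficulty lies in enforcing bipartiteness without destroying cubicity, planarity, or the correspondence of Hamiltonian cycles.

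The reduction would replace each \emph{vertex} of $G$ by a local gadget rather than operate on edges. The reason is a parity argument: in any Hamiltonian cycle of $G'$, the edges lying inside a fixed gadget form vertex-disjoint paths that cover all of the gadget's vertices, and the path endpoints are exactly those terminals whose external (inter-gadget) connection is traversed. Hence the number of external connections used at each gadget is even, so with three terminals it is exactly two (it cannot be $0$, which would isolate the gadget). This mirrors the fact that a Hamiltonian cycle of a cubic graph uses exactly two of the three edges at every vertex. Accordingly, I want a planar bipartite gadget with three distinguished terminals such that (i) every non-terminal vertex has degree three while each terminal has internal degree one less than three, (ii) the three terminals appear on the outer face in the cyclic order dictated by the rotation at $v$ in a fixed planar embedding of $G$, so that planarity is preserved, and (iii) for each of the three pairs of terminals there is a Hamiltonian path of the gadget joining that pair. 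Given such a gadget, both directions of the reduction are routine: a Hamiltonian cycle of $G$, which selects two edges at each vertex, extends pair by pair to a Hamiltonian cycle of $G'$ through the pairwise Hamiltonian paths, and conversely the two-of-three property lets one read off a Hamiltonian cycle of $G$ from any one of $G'$.

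The main obstacle is condition (iii) together with global bipartiteness. In a bipartite graph whose two colour classes differ in size by one, every Hamiltonian path has both endpoints in the majority class; thus a single bipartite gadget admitting a Hamiltonian path between \emph{each} pair of terminals forces all three terminals into the same colour class. If neighbouring gadgets were joined by a single external edge, this would connect two equally coloured terminals and violate bipartiteness, whereas merely subdividing that edge would reintroduce a vertex of degree two that no Hamiltonian cycle can legally cover (a two-terminal attachment can never be traversed and returned to through a single incident edge). The resolution I would pursue is to attach gadgets by \emph{identifying} corresponding terminals instead of adding edges between them: orient each edge of $G$ and design the gadget so that the tail terminal has internal degree two and the head terminal has internal degree one, so that the identified vertex has degree three and no spurious low-degree vertex arises. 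Bipartiteness then reduces to choosing, for each gadget, its $2$-colouring or the swapped one so that identified terminals receive a consistent colour; by building the gadgets so that both sides always request the same colour, the resulting system of $\mathbb{Z}_2$ constraints is trivially satisfiable, and $G'$ is bipartite. Exhibiting one concrete small gadget that meets (i)--(iii) simultaneously with these degree and colour prescriptions, and checking that its fixed drawing glues consistently around each face, is the technical heart of the argument; once such a gadget is in hand, cubicity, planarity, bipartiteness, and the Hamiltonicity equivalence all follow, and the transformation is plainly polynomial, completing the proof.
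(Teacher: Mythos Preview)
The paper does not prove this theorem at all; it merely cites it as a known result of Akiyama, Nishizeki, and Saito~\cite{akiyama1980np} and then uses it as the source problem for the reduction in Section~\ref{sec:PMST_hard}. There is therefore no ``paper's own proof'' to compare your proposal against.

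As for your sketch itself: the overall strategy (reduce from Hamiltonian cycle in cubic planar graphs via a vertex gadget that enforces bipartiteness) is the standard shape of such arguments and is plausible, but what you have written is not a proof, only a specification of what a gadget would need to satisfy. You explicitly defer ``exhibiting one concrete small gadget'' as ``the technical heart of the argument,'' and that is precisely where all the difficulty lies. Two concrete issues would need to be resolved before your plan becomes an argument. First, your terminal-identification scheme assigns internal degree $2$ to tail terminals and $1$ to head terminals, so the gadget at a vertex $v$ depends on how many of $v$'s three edges are outgoing; you would need either a family of gadgets (one for each in/out split) or an orientation of $G$ with a fixed split at every vertex, and you do not say how to obtain either. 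Second, your claim that the $\mathbb{Z}_2$ colouring constraints are ``trivially satisfiable'' because ``both sides always request the same colour'' presupposes a gadget whose three terminals lie in the same colour class \emph{and} whose internal $2$-colouring is forced up to a global swap; you have argued the former from the Hamiltonian-path condition but not the latter, and without it the global consistency argument is circular. Until an explicit gadget is on the table, neither point can be checked.
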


We prove Statement~2 of Theorem~\ref{thm:PMST} by reducing this restricted \textsc{HC} to \textsc{MinPMST} with stated restrictions.
Let $G = (V, E)$ be a cubic planar bipartite graph, and fix its planar embedding.
We construct from $G$ a cubic planar bipartite graph $\tilde{G} = (\tilde{V}, \tilde{E})$ with edge weight $\tilde{w} \colon \tilde{E} \to \{0, 1\}$ as follows (see Figure~\ref{fig:PMST}).

For each edge $e = \{u, v\} \in E$, replace $e$ with two disjoint paths of length $3$ between $u$ and $v$; in each path, the only middle edge is of weight $1$ and the other two edges are of weight $0$.
For each vertex $u \in V$, let $f_{u, 1}, f_{u, 2}, f_{u, 3} \in E$ denote the three edges incident to $u$ in this order in the clockwise direction, and then let $u'_1, u'_2, \dots, u'_6$ be the new vertices adjacent to $u$ in this order, where $u'_{2i-1}$ and $u'_{2i}$ are created instead of the edge $f_{u, i}$ for each $i = 1, 2, 3$.
Rename $u$ as $\tilde{u}_0$, merge $u'_4$ and $u'_5$ into a single vertex $\tilde{u}_1$, $u'_6$ and $u'_1$ into $\tilde{u}_2$, and $u'_2$ and $u'_3$ into $\tilde{u}_3$, and remove one of the two parallel edges of weight $0$ between $\tilde{u}_0$ and $\tilde{u}_i$ for each $i = 1, 2, 3$.

\begin{figure}[t!]
\begin{center}
\includegraphics[scale=0.45]{./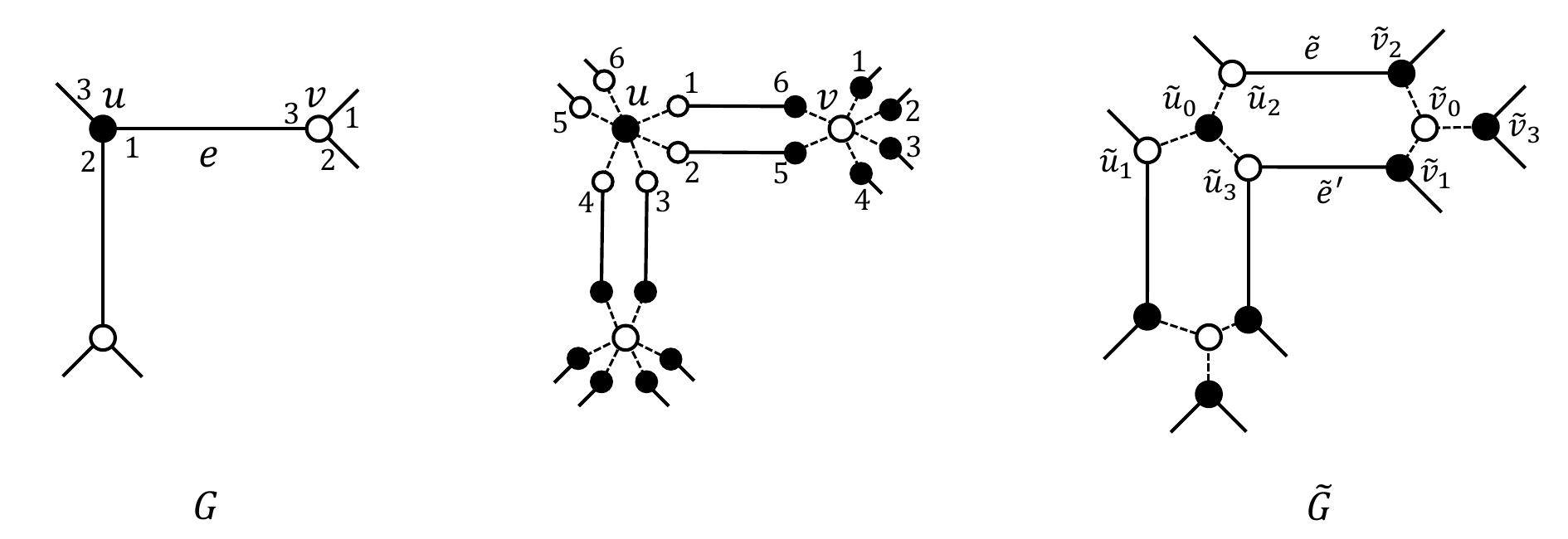}
\caption{Construction of $\tilde{G}$ from $G$. The middle graph is the intermediate graph just after replacing each edge with two disjoint paths. Dashed lines represent edges of weight $0$, and solid lines represent edges of weight $1$. In this example, $\tilde{e}$ and $\tilde{e}'$ are derived from $e = f_{u, 1} = f_{v, 3}$.}
\label{fig:PMST}
\end{center}
\end{figure}

The resulting graph $\tilde{G} = (\tilde{V}, \tilde{E})$ is indeed a cubic planar bipartite graph.
The following claim completes the proof of Statement~2a.

\begin{claim}\label{cl:PMST}
$G$ has a Hamiltonian cycle if and only if $\tilde{G}$ has a spanning tree $\tilde{T}$ containing a perfect matching with $\tilde{w}(\tilde{T}) \le |V|$.
\end{claim}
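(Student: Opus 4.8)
The plan is to establish the two directions of Claim~\ref{cl:PMST} by carefully tracking how the weight-$1$ edges of $\tilde G$ correspond to edges of $G$. Recall that each edge $e=\{u,v\}\in E$ spawns exactly two weight-$1$ edges, namely the middle edges $\tilde e,\tilde e'$ of the two length-$3$ paths replacing $e$; all other edges of $\tilde G$ (the ``spokes'' $\{\tilde u_0,\tilde u_i\}$ and the endpoint-edges of the length-$3$ paths) have weight $0$. Since $|V|$ is even for any bipartite graph admitting a Hamiltonian cycle (and $\tilde G$ has no perfect matching otherwise), it suffices to prove the stated equivalence; the weight bound $\tilde w(\tilde T)\le |V|$ will turn out to force $\tilde T$ to use exactly one of $\{\tilde e,\tilde e'\}$ for each $e$ in a Hamiltonian cycle and none for the others.

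\medskip\noindent
\emph{($\Rightarrow$)} Given a Hamiltonian cycle $C$ in $G$, I will build $\tilde T$ explicitly. Orient $C$ and for each edge $e=\{u,v\}\in E(C)$ pick \emph{one} of the two length-$3$ replacement paths, say the one through $\tilde e$; for edges $e\notin E(C)$, discard both replacement paths entirely (none of their interior vertices needs covering since the merges identified the relevant new vertices with the $\tilde u_i$'s — one has to check that every vertex of $\tilde V$ is either some $\tilde u_j$ or lies on a chosen path). The chosen paths together with a suitable subset of the weight-$0$ spokes $\{\tilde u_0,\tilde u_i\}$ should be assembled into a spanning tree: the cycle $C$ lifts to a single cycle in $\tilde G$ through all the $\tilde u_0$'s and the path-interiors, and deleting one edge of that lifted cycle makes it a tree on those vertices, to which the remaining $\tilde u_i$'s (those not already hit) are attached by their spokes. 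One then exhibits a perfect matching inside $\tilde T$: on each chosen length-$3$ path take the two end edges (weight $0$), matching $u,v$ to the two interior vertices — wait, that double-covers $u$; instead the matching alternates so that each $\tilde u_0$ is matched along exactly one incident chosen path and each $\tilde u_i$ with $i\in\{1,2,3\}$ is matched by its spoke to $\tilde u_0$ when needed. The weight is exactly $|E(C)|=|V|$ since precisely one weight-$1$ edge is used per cycle edge. The bookkeeping of which spokes to include and verifying acyclicity + spanning is the routine part; the delicate point is checking the matching covers every vertex, which is where the ``merge $u'_4,u'_5$ into $\tilde u_1$'' structure (guaranteeing each $\tilde u_i$ has exactly the spoke plus two path-edges, i.e.\ degree $3$ with a predictable local structure) is used.

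\medskip\noindent
\emph{($\Leftarrow$)} Conversely, suppose $\tilde T$ is a spanning tree with a perfect matching $\tilde M$ and $\tilde w(\tilde T)\le|V|$. The key structural observation is that in the gadget replacing a single edge $e=\{u,v\}$, for the interior vertices to be covered by $\tilde M$ and connected in $\tilde T$ to the rest, $\tilde T$ must contain at least one of the two length-$3$ paths in full (hence at least one weight-$1$ edge), \emph{unless} $e$'s gadget contributes nothing — but since $\tilde T$ is spanning, every interior vertex must be reached, so in fact \emph{every} gadget forces $\ge 1$ weight-$1$ edge, giving $\tilde w(\tilde T)\ge |E|$?? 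That is too strong, so the real claim must be subtler: the merges mean the ``interior'' vertices needing coverage are only the degree-$2$ ones genuinely internal to a path, and a gadget for $e\notin C$ can be entirely avoided because its would-be interior vertices were identified with $\tilde u_i$'s that get covered via spokes. So the argument is: let $F\subseteq E$ be the set of edges $e$ for which $\tilde T$ uses at least one weight-$1$ edge of $e$'s gadget; then $\tilde w(\tilde T)\ge|F|$, so $|F|\le|V|$. Next show the subgraph of $G$ induced by $F$ must be connected and spanning (because $\tilde T$ is), and that at every $v\in V$ at least two edges of $F$ are incident (because $\tilde v_0$ together with $\tilde v_1,\tilde v_2,\tilde v_3$ must all be spanned and matched, forcing two ``through'' connections). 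A connected spanning subgraph of $G$ with minimum degree $\ge 2$ on $|V|$ vertices using $\le|V|$ edges is exactly a Hamiltonian cycle. Then $C:=F$ is the desired Hamiltonian cycle.

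\medskip\noindent
\textbf{Main obstacle.} The crux — and the step I expect to be hardest — is the $(\Leftarrow)$ direction's local analysis at each vertex gadget: proving that the combination of ``$\tilde T$ spanning + $\tilde M$ perfect'' forces \emph{exactly two} of the three gadget-edges at every $\tilde v_0$ to carry a through-path (degree $\ge 2$ in $F$), while the weight budget $\le|V|$ forbids degree $3$ anywhere and forbids using both parallel length-$3$ paths of any gadget. This requires a careful enumeration of how a tree can route through the little ``propeller'' at each vertex and which matchings are consistent with the alternating-path-like constraints, using that $\tilde G$ is cubic so each $\tilde v_0$ and each $\tilde v_i$ has very limited local options. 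The forward direction is mostly an explicit construction plus verification, which is tedious but not conceptually hard.
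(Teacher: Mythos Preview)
Your high-level skeleton for $(\Leftarrow)$ is right, but you have misidentified where the work lies, and this stems from a misreading of the construction. After the merges, \emph{every} vertex of $\tilde G$ is some $\tilde u_i$ with $i\in\{0,1,2,3\}$; there are no further ``interior'' path vertices to worry about. Each $\tilde u_0$ is incident only to the three weight-$0$ spokes $\{\tilde u_0,\tilde u_i\}$, and each $\tilde u_i$ ($i\ge 1$) is incident to its spoke plus exactly two weight-$1$ edges (one coming from each of the two original edges $f_{u,\cdot}$ that were merged there). With this in hand, the backward direction is a one-line counting argument, not a routing enumeration: in the perfect matching $\tilde M\subseteq\tilde T$, each $\tilde u_0$ is matched along a spoke to exactly one of $\tilde u_1,\tilde u_2,\tilde u_3$, so the remaining two must be matched by weight-$1$ edges. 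That already gives $\tilde w(\tilde T)\ge |\tilde M\cap\{\text{weight }1\}|\ge |V|$, hence equality, hence $\tilde T_1\subseteq\tilde M$ and every $u\in V$ has exactly two neighbours in the projected edge set $T_1\subseteq E$. Since removing the $\le|V|$ weight-$1$ edges from $\tilde T$ leaves the $|V|$ stars on the clusters, $T_1$ is connected in $G$; a connected $2$-regular spanning subgraph is a Hamiltonian cycle. Your ``main obstacle'' (local tree routings through the propeller) never needs to be faced.

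For $(\Rightarrow)$ your construction is tangled because you are trying to build the tree and the matching simultaneously while still thinking about nonexistent interior vertices. The clean route (which the paper takes, via Observation~\ref{obs:PMST}) is: take \emph{all} weight-$0$ edges, and for each Hamiltonian edge $e\in X$ select one of its two weight-$1$ edges so that the selected edges are pairwise disjoint. This is always possible by walking once around $X$ and choosing greedily, since at each $\tilde u_i$ ($i\ge 1$) only two weight-$1$ edges meet, and a careful choice of the starting edge makes the walk close up consistently. The result is connected (because $X$ is Hamiltonian), has total weight $|V|$, and admits a perfect matching (the selected weight-$1$ edges cover two of $\tilde u_1,\tilde u_2,\tilde u_3$ at each $u$, and the remaining one is matched to $\tilde u_0$ by its spoke); any spanning tree of this subgraph then works.
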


\begin{proof}
Suppose that $G$ has a Hamiltonian cycle $X$.
By Observation~\ref{obs:PMST}, it suffices to construct a connected subgraph $\tilde{G}' = (\tilde{V}, \tilde{E}')$ such that $\tilde{G}'$ admits a perfect matching and $\tilde{w}(E') \le |V|$.
For this purpose, we can assume that all the edges of weight $0$ are included in $\tilde{E}'$.

Let $\tilde{E}_0$ be the set of edges of weight $0$, and $\tilde{X}$ be the set of edges of weight $1$ each of which is derived from an edge in the Hamiltonian cycle $X$.
We then observe that the subgraph $(\tilde{V}, \tilde{E}_0 \cup \tilde{X})$ is connected since $X$ is a Hamiltonian cycle in $G$.
We also see $\tilde{w}(\tilde{E}_0 \cup \tilde{X}) = |\tilde{X}| = 2|V|$ as exactly two edges in $\tilde{E}$ are derived from each edge in $E$.
Since the two edges derived from the same edge connect the same pair of connected components of $\tilde{E}_0$, even if we remove one of them for each edge in $X$, the resulting subgraph is still connected.
Thus, in order to construct a desired subgraph $\tilde{G}'$, it suffices to choose one of the two edges for each pair so that the chosen edges form a matching in $\tilde{G}$, which can be extended to a perfect matching in $\tilde{G}$ by using edges in $\tilde{E}_0$ (since exactly two edges in $X$ are incident to each $u \in V$, exactly one of $\tilde{u}_1, \tilde{u}_2, \tilde{u}_3$ is exposed by the matching formed by the chosen edges, which can be matched with $\tilde{u}_0$).

Let $X = (x_1, x_2, \dots, x_n, x_1)$ and $s = x_1$.
Without loss of generality, we assume that $f_{s, 1} = \{x_n, x_1\}$ and $f_{s, 2} = \{x_1, x_2\}$ are the two edges around $s$ traversed in this order in $X$ (by shifting the indices of $f_{s, i}$ and by reversing the indices of $X$ if necessary).
Then, the two edges in $\tilde{G}$ corresponding to $f_{s, 1}$ are incident to $\tilde{s}_2$ and $\tilde{s}_3$, and those corresponding to $f_{s, 2}$ are incident to $\tilde{s}_3$ and $\tilde{s}_1$.
First, let us choose the latter edge $\tilde{e}_1$ incident to $\tilde{s}_1$, which is disjoint from both edges corresponding to $f_{s, 1}$.
For the remaining vertices $x_k$ $(k = 2, 3, \dots, n)$, in the ascending order of $k$, we can choose one edge $\tilde{e}_k$ corresponding to $\{x_{k-1}, x_k\}$ so that $\tilde{e}_{k-1}$ and $\tilde{e}_k$ are disjoint (as we always have two disjoint choices of $\tilde{e}_k$).
Recall that both edges corresponding to $f_{s, 1}$ are disjoint from $\tilde{e}_1$, which implies that $\tilde{e}_n$ and $\tilde{e}_1$ are also disjoint (regardless of the choice of $\tilde{e}_n$).
Thus, the chosen edges are pairwise disjoint, i.e., they form a matching in $\tilde{G}$, and we are done.

\medskip
Suppose that $\tilde{G}$ has a spanning tree $\tilde{T}$ containing a perfect matching with $\tilde{w}(\tilde{T}) \le |V|$.
Since $|\tilde{T}| = |\tilde{V}| - 1 = 4|V| - 1$, it consists of at most $|V|$ edges of weight $1$ and at least $3|V| - 1$ edges of weight $0$.
Let $\tilde{T}_1 \coloneq \{\, e \in \tilde{T} \mid w(e) = 1 \,\}$ and $T_1$ be the set of edges in $E$ corresponding to the edges in $\tilde{T}_1$.
Observe that $T_1$ must be connected.

Let $\tilde{M} \subseteq \tilde{T}$ be a perfect matching.
Then, for each vertex $u \in V$, the corresponding center vertex $\tilde{u}_0 \in \tilde{V}$ is matched with an edge $\{\tilde{u}_0, \tilde{u}_i\}$ $(i \in \{1, 2, 3\})$ of weight $0$, and the other two neighbors $\tilde{u}_j$ and $\tilde{u}_k$ $(j \neq i \neq k)$ are matched with edges of weight $1$.
Since there are $2|V|$ such vertices in total and $|\tilde{T}_1| \le |V|$, we must have $\tilde{T}_1 \subseteq \tilde{M}$ and $|\tilde{T}_1| = |V|$.
That is, $T_1$ is a connected spanning subgraph of $G$ such that each vertex has its degree exactly $2$, which is indeed a Hamiltonian cycle.
\end{proof}

For Statement~2b, we add the absent edges of $\tilde{G}$ by setting their weight as $2$.
Then, Claim~\ref{cl:PMST} holds as it is (by replacing $\tilde{G}$ with the augmented graph), because any perfect matching uses at least $|V|$ edges of weight at least $1$.

\section{On Strongly Balanced Spanning Trees (Proof of Theorem~\ref{thm:SBST})}\label{sec:SBST}
In this section, we prove Theorem~\ref{thm:SBST}.
We remark that Lemma~\ref{lem:NY2024} implies the following observation, which leads to the tractability of \textsc{MinSBST} for the bipartite graphs with the aid of polynomial-time algorithms for the weighted matroid intersection problem.

\begin{observation}[cf.~{\cite[Lemma~5]{frank1998bipartite} and \cite[Lemma~3.6]{norose2024approximation}}]\label{obs:SBST}
For a balanced bipartite graph $G$, the set of strongly balanced spanning trees of $G$ can be represented as the set of common bases of two matroids, one of which is graphic and the other is (a truncation of) a partition matroid.
\end{observation}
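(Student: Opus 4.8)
The plan is to realize the strongly balanced spanning trees of $G$ as the common bases of a graphic matroid (encoding ``spanning tree'') and a truncated partition matroid (encoding the degree condition on one side). Fix the bipartition $(V^+,V^-)$ with $|V^+|=|V^-|=n$. If $G$ is disconnected it has no spanning tree at all, so assume $G$ is connected; then every spanning tree $T$ has exactly $2n-1$ edges, each joining $V^+$ to $V^-$, so $\sum_{v\in V^+}d_T(v)=2n-1$. The first step is a reformulation of the defining property: since $d_T(v)\ge 1$ for every vertex of a spanning tree on $2n\ge 2$ vertices, the inequality $d_T(v)\le 2$ for all $v\in V^+$ forces exactly one vertex of $V^+$ to have degree $1$ and the remaining $n-1$ to have degree $2$; conversely a strongly balanced spanning tree whose leaf side is $V^+$ trivially satisfies $d_T(v)\le 2$ for all $v\in V^+$. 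Hence $T$ is a strongly balanced spanning tree with leaf in $V^+$ if and only if $T$ is a spanning tree with $d_T(v)\le 2$ for every $v\in V^+$. (The family of all strongly balanced spanning trees is the union of this set with the symmetric one for $V^-$; each is a matroid intersection, and I will describe the $V^+$ version, noting that for the optimization version one simply solves both weighted matroid intersection instances and keeps the better solution.)

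Second, I would define the two matroids on ground set $E$. Let $M_1$ be the graphic matroid of $G$; since $G$ is connected, its bases are precisely the spanning trees of $G$ and $\mathrm{rank}(M_1)=2n-1$. For the degree side, note that $E=\bigsqcup_{v\in V^+}\delta(v)$ is a partition of $E$, where $\delta(v)$ is the set of edges incident to $v$, because each edge has exactly one endpoint in $V^+$. Let $N$ be the partition matroid in which a set is independent iff it contains at most $2$ edges from each block $\delta(v)$, and let $M_2$ be the truncation of $N$ to rank $2n-1$. Both $M_1$ and $M_2$ are matroids of exactly the stated types.

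Third, I would verify that the common bases of $M_1$ and $M_2$ are the desired trees. A common basis $F$ has cardinality equal to the common rank, it is a basis of $M_1$ (hence a spanning tree), and it is independent in $N$, i.e. $|F\cap\delta(v)|=d_F(v)\le 2$ for all $v\in V^+$; since $|F|=2n-1=\mathrm{rank}(M_1)$ and $F$ is independent in $N$, it is automatically a basis of $M_2$. By the reformulation of the first step, these $F$ are exactly the strongly balanced spanning trees with leaf in $V^+$, and the converse inclusion is immediate. Taking the union with the analogous intersection for $V^-$ yields the observation.

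Finally, the obstacles: there is no real difficulty here. The only point requiring a moment's care is the role of the truncation — the partition matroid $N$ can have rank strictly larger than $2n-1$, so truncating to $2n-1$ is what makes the cardinalities of $M_1$- and $M_2$-bases agree (and when $\mathrm{rank}(N)<2n-1$ there is no common basis, consistent with the absence of a strongly balanced spanning tree) — together with the elementary counting in the first step, which must be stated so as to pin down ``exactly one leaf, all others of degree $2$'' rather than merely ``degree at most $2$.'' I would also remark that Lemma~\ref{lem:NY2024} is not needed for this observation, although it is what motivates singling out these trees.
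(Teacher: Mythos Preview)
Your argument is correct and is exactly the standard construction that the paper has in mind; note that the paper does not actually supply its own proof of this observation but simply records it with references to \cite{frank1998bipartite} and \cite{norose2024approximation}, so there is nothing further to compare against. Your remark that one obtains a separate matroid-intersection instance for each choice of the leaf side $V^+$ or $V^-$ (and solves both for \textsc{MinSBST}) is the right reading of the slightly informal phrasing of the observation.
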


In what follows, we prove the NP-hardness of \textsc{SBST}.
The \emph{incidence graph} of a $3$-CNF $\psi(x) = C_1 \wedge C_2 \wedge \cdots \wedge C_m$ on $n$ boolean variables $x = (x_1, x_2, \dots, x_n)$ is a bipartite graph defined as follows:
the vertex set is the set of variables and clauses, and an edge exists between a variable $x_i$ and a clause $C_j$ if and only if $C_j$ contains a positive or negative literal of $x_i$.
The 3-SAT problem is known to be NP-complete even when the incidence graph of the input $3$-CNF is very restricted.

\begin{problem}[\textsc{Satisfiability of 3-CNF (3-SAT)}]
\mbox{ }
\begin{description}
\setlength{\itemsep}{0mm}
\item[Input:] A $3$-CNF $\psi(x)$ on $n$ boolean variables $x = (x_1, x_2, \dots, x_n)$.
\item[Goal:] Test whether there exists an assignment $a \in \{0, 1\}^n$ such that $\psi(a) = 1$ or not.
\end{description}
\end{problem}

\begin{theorem}[Lichtenstein~\cite{lichtenstein1982planar}]
\textsc{3-SAT} is NP-complete even if $\psi(x)$ is restricted so that the incidence graph of $\psi(x)$ attached with a Hamiltonian cycle on the variables is planar.
\end{theorem}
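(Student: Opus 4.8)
The final statement is Lichtenstein's planar \textsc{3-SAT} theorem, and the plan is to prove it by a polynomial-time reduction from ordinary \textsc{3-SAT}, which is NP-complete. Membership in NP is immediate: a satisfying assignment $a \in \{0,1\}^n$ is a certificate checkable in linear time, and the planarity restriction only shrinks the set of admissible instances, so it does not affect membership. The entire content is therefore NP-hardness. Given an arbitrary $3$-CNF $\psi$ together with its (in general non-planar) incidence graph, I would produce an equisatisfiable $3$-CNF $\psi'$ and a cyclic ordering of its variables such that the incidence graph of $\psi'$, augmented by the variable cycle, is planar, with $|\psi'|$ polynomial in $|\psi|$. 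The idea is to fix a drawing of the incidence graph of $\psi$ (with the variable cycle added), and then eliminate its crossings one at a time by local gadget replacement.

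The reduction rests on two gadgets. The first is a \emph{propagation} (equality) gadget: the two clauses $(x \vee \bar y) \wedge (\bar x \vee y)$, padded to width three in the standard way, force $x = y$ in every satisfying assignment. Chaining such gadgets through a sequence of fresh variables lets me transmit the truth value of a literal along an arbitrary planar path---a \emph{wire}---and lets me replace a single edge of the incidence graph, which might otherwise be forced to cross many others, by a path that can be routed around obstacles. The second, and essential, gadget is a \emph{crossover}: a constant-size $3$-CNF on $O(1)$ fresh variables with four distinguished terminals $a, a', b, b'$, whose incidence graph can be drawn in a disk with $a, b$ on two consecutive sides and $a', b'$ on the opposite sides, such that (i) every satisfying assignment satisfies $a = a'$ and $b = b'$, and (ii) for each of the four choices of $(a,b) \in \{0,1\}^2$ the gadget is satisfiable. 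Properties (i) and (ii) together mean that the gadget realizes exactly the relation ``$a' = a$ and $b' = b$'', so the two wires can cross while transmitting both values faithfully and \emph{independently}. Such a gadget can be assembled from a small planar core together with propagation gadgets attached to its terminals, and I would verify (i)--(ii) by a finite case analysis over the assignments to its internal variables.

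Globally, I would place the $n$ variable vertices in convex position and connect consecutive ones by the Hamiltonian cycle, which fixes the cyclic order the theorem requires. Each clause (degree at most $3$ in the incidence graph) is attached to its variables by wires built from propagation gadgets, routed in a fixed planar layout in which any two wires cross transversally and the total number of crossings is polynomial in $n + m$. I then replace every crossing by one copy of the crossover gadget, splicing the two wires through its four terminals, and I insert the fresh variables of all gadgets into the variable cycle consistently with the planar embedding, so that the augmented incidence graph of the resulting formula $\psi'$ is planar and still carries a single Hamiltonian cycle on its variables. Equisatisfiability follows from the gadget properties: a satisfying assignment of $\psi$ extends to $\psi'$ by propagating each literal's value along its wires and choosing, via property (ii), consistent internal values in each crossover; conversely, property (i) and the equality gadgets force all copies of a literal to agree, so a satisfying assignment of $\psi'$ restricts to one of $\psi$. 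Since each crossing and each wire segment contributes only $O(1)$ fresh variables and clauses, and there are polynomially many of each, the reduction runs in polynomial time.

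The main obstacle is the crossover gadget. Producing a fixed-size planar $3$-CNF that genuinely \emph{decouples} the two crossing wires---so that each output equals its own input, all four input combinations remain satisfiable, and no spurious correlation between the two wires is introduced---is delicate, and checking properties (i)--(ii) requires a careful finite verification over the gadget's internal assignments. A secondary difficulty is the bookkeeping imposed by the Hamiltonicity requirement: one must argue that the fresh variables introduced by the wires and crossovers can all be threaded into a single cycle that is consistent with the chosen planar embedding, rather than merely that the incidence graph alone is planar. Once a correct crossover gadget and a consistent rule for inserting fresh variables into the cycle are in hand, the remaining steps are routine.
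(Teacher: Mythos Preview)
The paper does not prove this statement at all: it is quoted as a known result of Lichtenstein~\cite{lichtenstein1982planar} and then \emph{used} as the starting point for the reduction proving Theorem~\ref{thm:SBST}. There is therefore no ``paper's own proof'' to compare your proposal against; you are attempting to reprove a cited black box.

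That said, your sketch is the right shape and matches Lichtenstein's original strategy: fix a drawing of the incidence graph (with the variable cycle), replace each crossing by a constant-size crossover gadget that transmits two independent bits through a planar patch, and thread the fresh variables into the Hamiltonian cycle. You correctly identify the two genuine difficulties---designing a crossover gadget with the decoupling property and maintaining the single variable cycle through all fresh variables---but you do not actually exhibit a gadget, so as written the proposal is a plan rather than a proof. If you want to complete it, you need to either write down an explicit crossover (Lichtenstein's original one works, as do later simplifications) and verify properties (i)--(ii), or simply cite the result as the paper does.
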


We prove Theorem~\ref{thm:SBST} by reducing this restricted \textsc{3-SAT} to \textsc{SBST} with the stated restriction.
Let $\psi(x)$ be a $3$-CNF $\psi(x) = C_1 \wedge C_2 \wedge \cdots \wedge C_m$ on $n$ boolean variables $x = (x_1, x_2, \dots, x_n)$ whose incidence graph attached with a Hamiltonian cycle $X$ on the variables is planar, and fix its planar embedding.
Without loss of generality, $X$ intersects $x_1, x_2, \dots, x_n$ in this order.
We construct from $\psi$ a subcubic planar graph $G = (V, E)$ as follows (see Figures~\ref{fig:SBST_variable}--\ref{fig:SBST_clause}). 

For each variable $x_i$ $(i = 1, 2, \dots, n)$, which appears in $k_i^\In$ clauses lying inside of $X$ and in $k_i^\Out$ clauses lying outside of $X$ (under the planar embedding of the incidence graph fixed above), create the following variable gadget (see Figure~\ref{fig:SBST_variable}).
Create a cycle
\[(u_i, u^\In_{i, 0}, u^\In_{i, 1}, \bar{u}^\In_{i, 1}, \dots, u^\In_{i, k_i^\In}, \bar{u}^\In_{i, k_i^\In}, u_i^\mathrm{T}, u_i^\mathrm{F}, u^\Out_{i, k_i^\Out}, \bar{u}^\Out_{i, k_i^\Out}, \dots, u^\Out_{i, 1}, \bar{u}^\Out_{i, 1}, u^\Out_{i, 0}, u_i).\]
Then, add two vertices $u_{i, 0}$ and $u_i^\End$ with four incident edges $\{u_{i, 0}, u^\In_{i, 0}\}$, $\{u_{i, 0}, u^\Out_{i, 0}\}$, $\{u_i^\End, u_i^\mathrm{T}\}$, and $\{u_i^\End, u_i^\mathrm{F}\}$, and two vertices $u'_{i,0}$ and $u''_{i,0}$ with two incident edges $\{u_{i, 0}, u'_{i, 0}\}$ and $\{u'_{i, 0}, u''_{i, 0}\}$.

\begin{figure}[t!]
\begin{center}
\includegraphics[scale=0.4]{./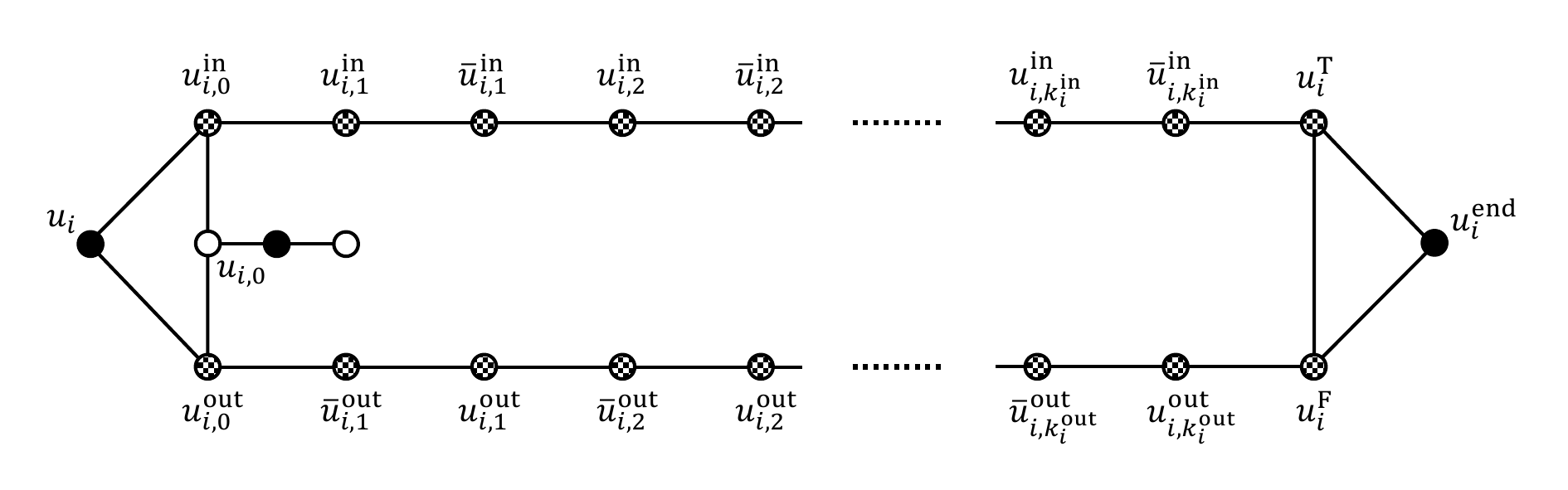} \\
\includegraphics[scale=0.4]{./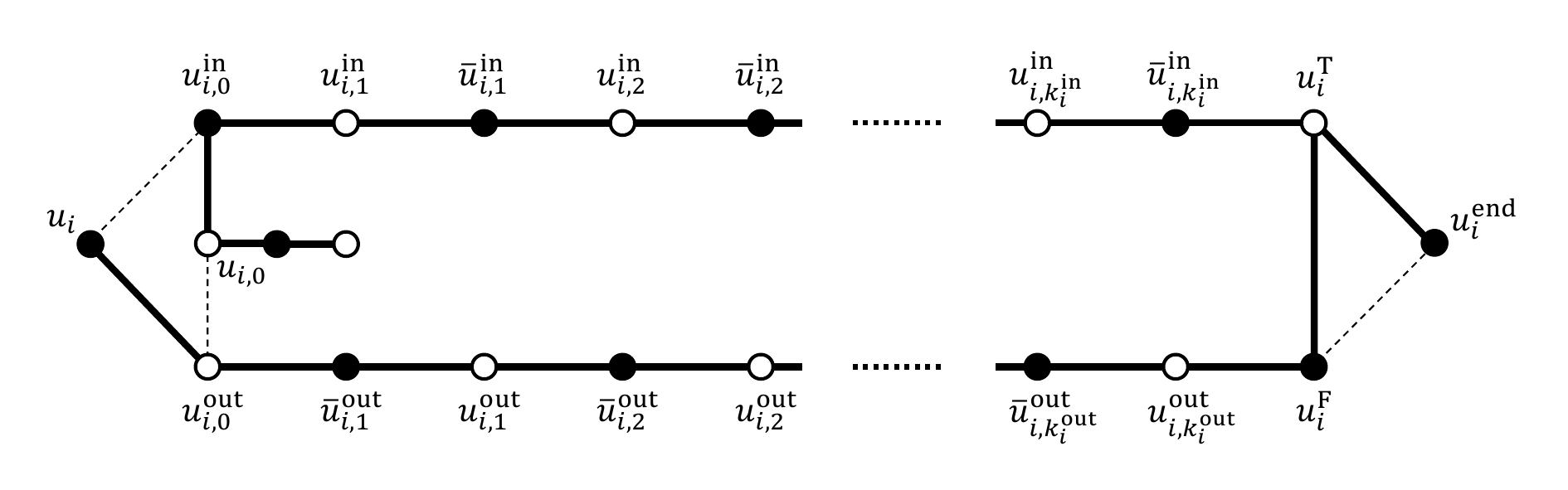}\vspace{-1mm}
\caption{The variable gadget for a variable $x_i$. In any strongly balanced spanning tree $T$ (if exists), the black vertices will be in $V_T^+$ (with degree constraint $2$), the white vertices will be in $V_T^-$ (without degree constraint), and the checkered vertices can be in either side, alternately from $u_i$; the lower figure illustrates a subtree corresponding to an assignment with $x_i = 1$.}
\label{fig:SBST_variable}
\end{center}
\end{figure}

We connect those variable gadgets as Figure~\ref{fig:SBST_path}.
Specifically, for each $i = 1, 2, \dots, n - 1$, we introduce a joint vertex $u_{i, i+1}$ with two incident edges $\{u_i^\mathrm{end}, u_{i, i+1}\}$ and $\{u_{i, i+1}, u_{i+1}\}$, and put a vertex $u_{n, n+1}$ with an incident edge $\{u_n^\mathrm{end}, u_{n, n+1}\}$ at the end.
Furthermore, at the beginning (before $u_1$), we put a tree consisting of eight vertices as illustrated, whose three leaves are named as $s_1, s_2, s_3$.

\begin{figure}[t!]
\begin{center}
\includegraphics[scale=0.4]{./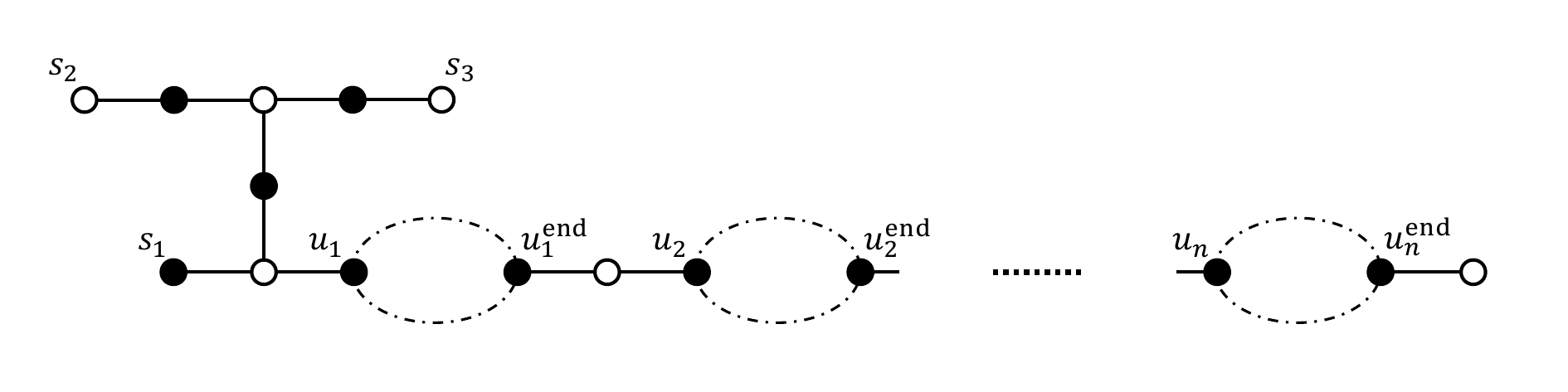}
\caption{The whole structure of the variable gadgets. In any strongly balanced spanning tree $T$ (if exists), the black vertices will be in $V_T^+$ (with degree constraint $2$ except for $s_1$, which will be the unique leaf in $V_T^+$) and the white vertices will be in $V_T^-$ (without degree constraint).}
\label{fig:SBST_path}
\end{center}
\end{figure}

Finally, for each clause $C_j = (y_{j,1} \vee y_{j,2} \vee y_{j,3})$, create a clause gadget as follows (see Figure~\ref{fig:SBST_clause}).
Create a cycle $(c_{j,1}, c_{j,12}, c_{j,2}, c_{j,23}, c_{j,3}, c_{j,31}, c_{j,1})$, and add two vertices $c_j$ and $c'_j$ with two incident edges $\{c_j, c'_j\}$ and $\{c'_j, c_{j,31}\}$.
For each $\ell = 1, 2, 3$, add an edge between $c_{j,\ell}$ and $b_{j,\ell}$, where $b_{j,\ell}$ is a vertex in a variable gadget that is determined depending on whether $C_j$ lies inside or outside of $X$ and what literal $y_{j,\ell}$ is as follows.
Suppose that $C_j$ lies inside of $X$, and that $y_{j,\ell}$ is a positive literal of a variable $x_i$.
Then, $b_{j,\ell} = u_{i,k}^\In$, where $1 \le k \le k_i^\In$ is such that $y_{j,\ell}$ is the $k$-th appearance of $x_i$ in clauses lying inside of $X$ along the cycle $X = (x_1, x_2, \dots, x_n, x_1)$.
The other three cases are analogous; if $C_j$ lies outside of $X$, then replace the superscripts ${\bullet}^\In$ with ${\bullet}^\Out$, and if $y_{j,\ell}$ is a negative literal of $x_i$, then replace $u_i$ with $\bar{u}_i$.

\begin{figure}[t!]
\begin{center}\hspace{-10mm}
\begin{minipage}[b]{0.48\textwidth}
\centering
\includegraphics[scale=0.36]{./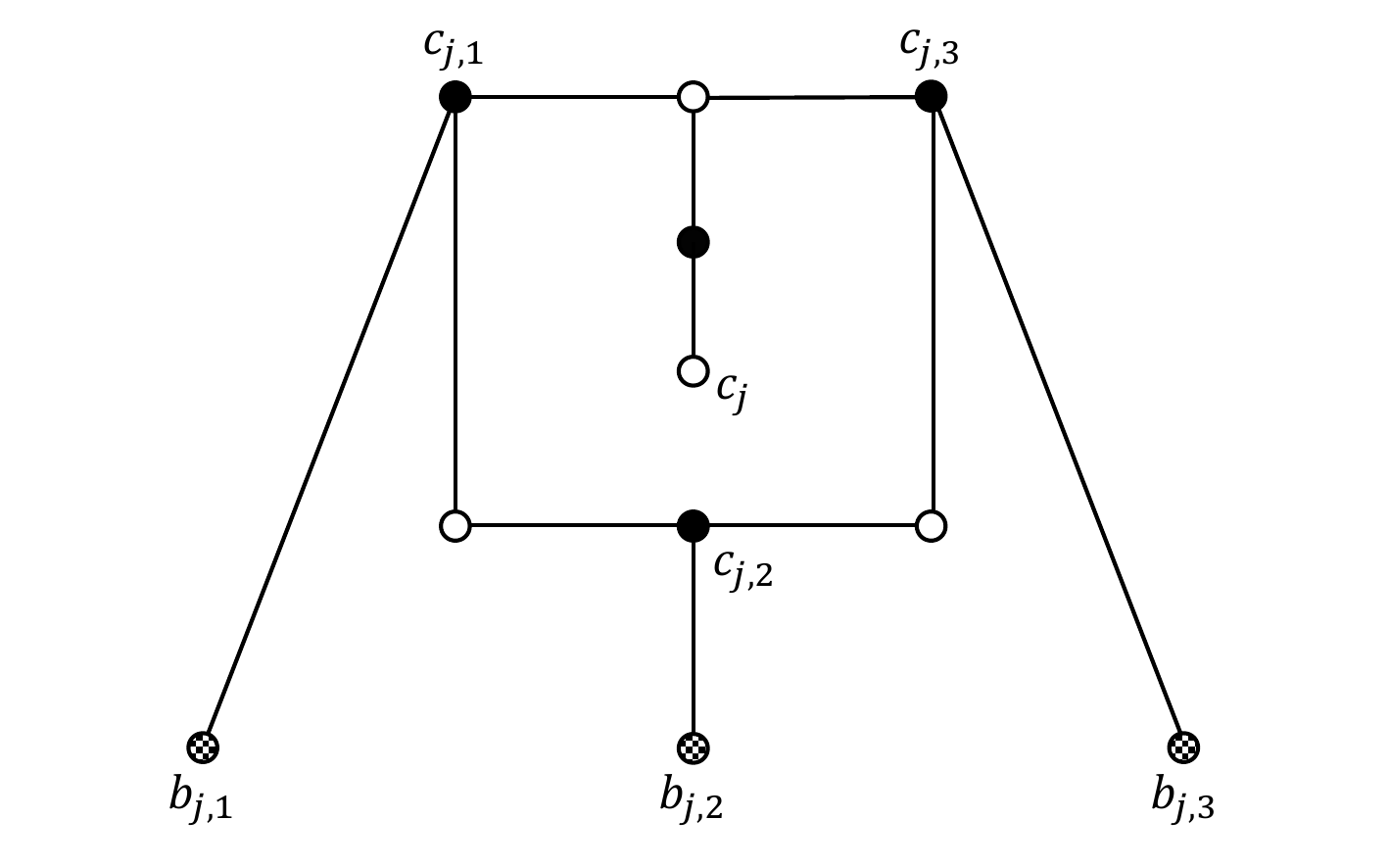}
\end{minipage}\hspace{7mm}
\begin{minipage}[b]{0.48\textwidth}
\centering
\includegraphics[scale=0.36]{./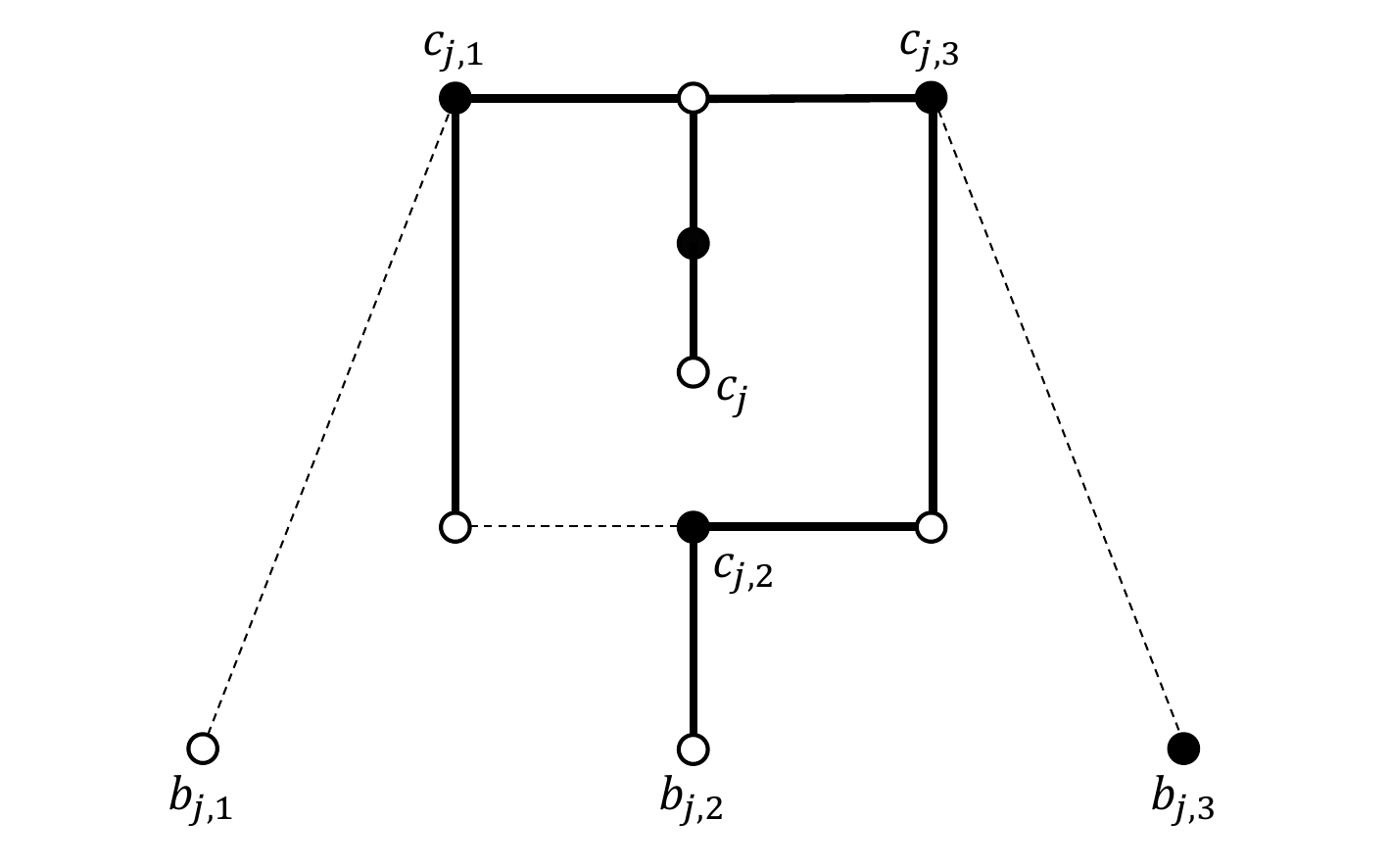}
\end{minipage}
\caption{The clause gadget for a clause $C_j = (y_{j,1} \vee y_{j,2} \vee y_{j,3})$. In any strongly balanced spanning tree $T$ (if exists), the black vertices will be in $V_T^+$ (with degree constraint $2$), the white vertices will be in $V_T^-$ (without degree constraint), and the checkered vertices can be in either side depending on the situation in the corresponding variable gadgets; the right figure illustrates an example of a subtree corresponding to the situation when $C_j$ is satisfied by $y_{j,2} = 1$.}
\label{fig:SBST_clause}
\end{center}
\end{figure}

The resulting graph $G = (V, E)$ is clearly subcubic and planar.
The following claim completes the proof of Theorem~\ref{thm:SBST}.

\begin{claim}\label{cl:SBST}
$\psi$ is satisfiable if and only if $G$ has a strongly balanced spanning tree.
\end{claim}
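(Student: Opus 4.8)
The plan is to prove both implications by a local, gadget-by-gadget analysis, using throughout two elementary facts about a strongly balanced spanning tree $T$ of $G$: every degree-$1$ vertex of $G$ must be a leaf of $T$, and (by the definition of strong balance together with $T$ being spanning) exactly one vertex of $V_T^+$ is a leaf of $T$ while every other vertex of $V_T^+$ has degree exactly $2$. It is sometimes convenient to pass through Lemma~\ref{lem:NY2024}, viewing strong balance as the existence of a perfect matching $M$ of $T$ and a leaf $r$ from which every tree-path is $M$-alternating, which makes the parity constraints along a gadget transparent.

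For the direction producing a tree from a satisfying assignment $a \in \{0,1\}^n$, we build $T$ explicitly. Along the spine of the construction (the initial eight-vertex tree, the joint vertices $u_{i,i+1}$, and the vertices $u_i$, $u_{i,0}$, $u'_{i,0}$, $u''_{i,0}$, $u_i^\End$) we take a fixed set of edges; inside the gadget of a variable $x_i$ we take the edges drawn in the sub-figure of Figure~\ref{fig:SBST_variable} corresponding to $a_i$, which breaks the long odd cycle at the appropriate place and routes the ``truth'' side; and for each clause $C_j$ we fix an index $\ell$ with $y_{j,\ell} = 1$ under $a$ and attach the clause gadget as in Figure~\ref{fig:SBST_clause}, hanging it off the edge $\{c_{j,\ell}, b_{j,\ell}\}$. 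The point is that when $y_{j,\ell} = 1$ the interface vertex $b_{j,\ell}$ in the variable gadget is in a state that still admits this pendant edge without creating a second leaf in $V_T^+$ or a degree-$3$ vertex there. One then checks routinely that the resulting $T$ is connected and acyclic, hence a spanning tree, and that, with the black/white/checkered colouring of the figures as its bipartition, $V_T^+$ consists of degree-$2$ vertices together with the single leaf $s_1$; that is, $T$ is strongly balanced.

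For the converse direction, let $T$ be any strongly balanced spanning tree of $G$. We first note the forced leaves of $T$: the vertices $s_1$, $s_2$, $s_3$, each $u''_{i,0}$, each $c_j$, and $u_{n,n+1}$. We then propagate the degree-$2$ requirement: starting from these forced leaves and from the degree-$3$ vertices of $G$ (not all of whose incident edges can lie in $T$), a step-by-step local argument within each gadget shows that, up to swapping $V_T^+$ and $V_T^-$, the bipartition of $T$ must coincide with the black/white colouring of Figures~\ref{fig:SBST_variable}--\ref{fig:SBST_clause}, with the checkered vertices placed consistently along each gadget; in particular the initial eight-vertex tree is arranged so that $s_2, s_3 \in V_T^-$ and $s_1$ is the unique leaf of $V_T^+$, while every $u''_{i,0}$, every $c_j$, and $u_{n,n+1}$ lies in $V_T^-$. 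With this rigidity in hand, the long odd cycle in the gadget of $x_i$ can be turned into a path of $T$ in essentially one of two ways, which we read off as a bit $a_i \in \{0,1\}$, defining an assignment $a$. Finally, we analyze a clause gadget: its six-cycle must be broken and the whole gadget (including the pendant edge $\{c_j, c'_j\}$) hung onto the rest of $T$, and a short case check shows that this is compatible with strong balance only if at least one interface edge $\{c_{j,\ell}, b_{j,\ell}\}$ belongs to $T$ with $b_{j,\ell}$ in the state corresponding to $y_{j,\ell} = 1$ under $a$. Hence every clause $C_j$ is satisfied by $a$, so $\psi(a) = 1$.

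I expect the main obstacle to be the rigidity step of the converse direction: verifying that the colourings in Figures~\ref{fig:SBST_variable}--\ref{fig:SBST_clause} are the only possibility, i.e.\ that the degree-$2$ and unique-leaf constraints genuinely propagate through the spine and through every variable and clause gadget, and then isolating the precise condition under which a clause gadget admits a valid attachment. None of this is conceptually deep, but it is a lengthy case analysis, made bulkier by the need to treat the ``inside $X$'' and ``outside $X$'' occurrences of each variable (and positive versus negative literals) separately at the interface vertices $b_{j,\ell}$, and by the bookkeeping required to confirm that exactly one leaf of $T$, namely $s_1$, ends up in $V_T^+$.
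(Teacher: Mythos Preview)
Your plan is correct and follows essentially the same approach as the paper: construct $T$ from a satisfying assignment via the gadget choices shown in the figures, and for the converse use the forced leaves (in particular the eight-vertex tree pinning $s_1$ as the unique $V_T^+$-leaf) together with the degree-$2$ constraint to force the structure of $T$ inside each gadget, then read off an assignment.

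One point of ordering is worth flagging. In your outline the clause-gadget case check appears only at the very end, after you have already asserted that each variable gadget admits ``essentially one of two'' subtrees. The paper makes explicit that this two-choice conclusion is not safe until one first proves that a clause gadget can never carry a $T$-path between two of its interface vertices $b_{j,\ell}$; otherwise a clause gadget could short-circuit part of a variable cycle and break the induction along the spine. The paper isolates this as the ``main task'' of the converse direction and dispatches it by a direct case analysis on the six-cycle using $c_j,c'_j,c_{j,31}$ exactly as you describe locally. Your ``rigidity step'' must therefore already contain this non-bridging argument (it cannot be deferred to step~5), so when you write it out, do the clause-gadget case analysis before, not after, the variable-gadget induction.
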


\begin{proof}
Suppose that $a = (a_1, a_2, \dots, a_n) \in \{0, 1\}^n$ satisfies $\psi$.
Then, we can construct a strongly balanced spanning tree $T$ of $G$ as follows (see also Figures~\ref{fig:SBST_variable}--\ref{fig:SBST_clause}).

For each variable $x_i \ (= a_i)$, we construct a spanning tree in the corresponding variable gadget by deleting three edges;
we delete $\{u_i, u_{i,0}^\In\}$, $\{u_{i, 0}, u_{i,0}^\Out\}$, and $\{u_i^\mathrm{F}, u_i^\End\}$ if $a_i = 1$ (cf.~Figure~\ref{fig:SBST_variable}), and we delete $\{u_i, u_{i,0}^\Out\}$, $\{u_{i, 0}, u_{i,0}^\In\}$, and $\{u_i^\mathrm{T}, u_i^\End\}$ if $a_i = 0$.
Under the assumption that $u_i \in V_T^+$, in the former case (when $a_i = 1$), all the vertices $u_{i,k}^\bullet$ $(k \ge 1,\ \bullet \in \{\In, \Out\})$ and $u_i^\mathrm{T}$ are in $V_T^-$ (i.e., without degree constraint), and all the vertices in the form $\bar{u}_{i,k}^\bullet$ $(k \ge 1,\ \bullet \in \{\In, \Out\})$ and $u_i^\mathrm{F}$ are in $V_T^+$ with exactly two incident edges in this gadget; in the latter case (when $a_i = 0$), $V_T^+$ and $V_T^-$ are interchanged.
Also, in either case, we have $u_i^\End \in V_T^+$ (under the assumption that $u_i \in V_T^+$).

We connect those spanning trees of the variable gadgets by taking all the edges connecting them including the tree at the beginning and the leaf at the end (illustrated in Figure~\ref{fig:SBST_path}).
Here, note that $s_1, s_2, s_3$ must be leaves in $T$, and $s_2$ and $s_3$ are on the same side of the bipartition $(V_T^+, V_T^-)$, which is different from $s_1$.
This forces $s_1$ to be a unique leaf in $V_T^+$.
As a result, inductively, $u_i$ and $u_i^\End$ $(i = 1, 2, \dots, n)$ are all in $V_T^+$.

Finally, for each clause $C_j = (y_{j, 1} \vee y_{j, 2} \vee y_{j, 3})$, pick one of the literals $y_{j, \ell}$ whose value is $1$ in the assignment $x = a$.
Then, take an edge $\{b_{j, \ell}, c_{j, \ell}\}$ and all but one edge incident to $c_{j, \ell}$ in the corresponding clause gadget (cf.~Figure~\ref{fig:SBST_clause}).
This results in a spanning tree of the clause gadget connected to the variable gadgets with the bridge $\{b_{j, \ell}, c_{j, \ell}\}$.
By the above observation and assumption, we have $b_{j, \ell} \in V_T^-$.
We then have $c_{j,1}, c_{j,2}, c_{j,3}, c'_j \in V_T^+$, and these vertices indeed have degree $2$ in the resulting tree.

Overall, we have indeed constructed a desired, strongly balanced spanning tree of $G$.

\medskip
Suppose that $G$ has a strongly balanced spanning tree $T$.
We show that $T$ should be in the form constructed above, which implies that we can reconstruct an assignment $a = (a_1, a_2, \dots, a_n)$ satisfying $\psi$.

The main task is to confirm that for any clause gadget plus its three neighbors $b_{j,1}, b_{j,2}, b_{j,3}$, the restriction of $T$ there does not contain a path between two of the neighbors; that is, any clause gadget does not play the role of connecting variable gadgets.
Observe that $T$ must contain the two edges incident to $c'_j$, and $c_j \in V_T^-$, $c'_j \in V_T^+$, and $c_{j, 31} \in V_T^-$.
By connectivity, at least one of $\{c_{j,1}, c_{j,31}\}$ and $\{c_{j,3}, c_{j,31}\}$ is in $T$.

Suppose that there exists a path between $b_{j,1}$ and $b_{j,3}$.
Since $\{c_{j,1}, c_{j,31}\}$ or $\{c_{j,3}, c_{j,31}\}$ is in $T$ and then $c_{j,1}$ or $c_{j,3}$, respectively, is in $V_T^+$, the path must be $(b_{j,1}, c_{j,1}, c_{j,31}, c_{j,3}, b_{j,3})$ and $c_{j,1}, c_{j,3} \in V_T^+$.
Then, neither $\{c_{j,1}, c_{j,12}\}$ nor $\{c_{j,3}, c_{j,23}\}$ is in $T$, and hence $\{c_{j,2}, c_{j,12}\}$ and $\{c_{j,2}, c_{j,23}\}$ must be in $T$.
Also, as $T$ is connected, $\{c_{j,2}, b_{j,2}\}$ must be in $T$.
This, however, cannot satisfy the degree constraint no matter which $c_{j,12}$ (degree $1$) or $c_{j,2}$ (degree $3$) is in $V_T^+$, a contradiction.

Suppose that there exists a path between $b_{j,1}$ and $b_{j,2}$.
If the path is via $c_{j,13}$, then $c_{j,1}, c_{j,2} \in V_T^+$ and hence neither of $\{c_{j,1}, c_{j,12}\}$ and $\{c_{j,2}, c_{j,12}\}$ is in $T$; then $c_{j,12}$ is isolated, a contradiction.
Otherwise, the path is $(b_{j,1}, c_{j,1}, c_{j,12}, c_{j,2}, b_{j,2})$.
In this case, $T$ must contain $\{c_{j,3}, c_{j,31}\}$ and hence $c_{j,3} \in V_T^+$.
If $\{c_{j,3}, c_{j,23}\}$ is in $T$, then $\{c_{j,3}, b_{j,3}\}$ cannot be in $T$ by the degree constraint, and hence $\{c_{j,2}, c_{j,23}\}$ is also in $T$ as $T$ is connected.
This, however, violates the degree constraint of $c_{j,2} \in V_T^+$, a contradiction.
Otherwise, $\{c_{j,3}, c_{j,23}\}$ is not in $T$, and then $\{c_{j,2}, c_{j,23}\}$ is in $T$ again as $T$ is connected.
This, however, cannot satisfy the degree constraint no matter which $c_{j,2}$ (degree $3$) or $c_{j,23}$ (degree $1$) is in $V_T^+$, a contradiction.

Thus, we have confirmed that any clause gadget does not connect variable gadgets.
Note that $T$ may contain two or three of $\{c_{j,1}, b_{j,1}\}$, $\{c_{j,2}, b_{j,2}\}$, and $\{c_{j,3}, b_{j,3}\}$, and then the end vertex $b_{j,\ell}$ with $\{c_{j,\ell}, b_{j,\ell}\}$ contained in $T$ must be in $V_T^-$ due to the degree constraint.
Also, no matter how many such edges are contained in $T$, exactly one of them is extended to $c_j$.

Next, let us consider variable gadgets.
By the above observation, they must be connected with the edges illustrated in Figure~\ref{fig:SBST_path}, which inductively forces that $u_i$ and $u_i^\End$ $(i = 1, 2, \dots, n)$ are both in $V_T^+$ as follows.

For each variable gadget, due to the form of the eight-vertex tree at the beginning $(i = 1)$ or by the induction hypothesis $(i \ge 2)$, we have $u_i \in V_T^+$ and $T$ contains exactly one edge incident to $u_i$ not in the variable gadget.
Then, exactly one of $\{u_i, u_{i,0}^\In\}$ and $\{u_i, u_{i,0}^\Out\}$ must be contained in $T$, and then $u_{i,0}^\In$ or $u_{i,0}^\Out$, respectively, is in $V_T^-$.
As with the clause gadgets, observe that $T$ must contain the two edges incident to $u'_{i,0}$, and $u''_{i, 0} \in V_T^-$, $u'_{i, 0} \in V_T^+$, and $u_{i, 0} \in V_T^-$.
By connectivity, $\{u_{i,0}, u_{i,0}^\In\}$ or $\{u_{i,0}, u_{i,0}^\Out\}$ is in $T$, and then $u_{i,0}^\In$ or $u_{i,0}^\Out$, respectively, is in $V_T^+$.
Thus, we have exactly two possible choices here such that exactly one of $u_{i,0}^\In$ and $u_{i,0}^\Out$ is in $V_T^+$ and the other is in $V_T^-$.
By connectivity again, the two paths $(u_{i,0}^\In, u_{i,1}^\In, \bar{u}_{i,1}^\In, \dots, u_{i,k_i^\In}^\In, \bar{u}_{i,k_i^\In}^\In, u_i^\mathrm{T})$ and $(u_{i,0}^\Out, \bar{u}_{i,1}^\Out, u_{i,1}^\Out, \dots, \bar{u}_{i,k_i^\Out}^\Out, u_{i,k_i^\Out}^\Out, u_i^\mathrm{F})$ are completely included in $T$.
Due to the parity, $T$ cannot contain both edges $\{u_i^\mathrm{T}, u_i^\End\}$ and $\{u_i^\mathrm{F}, u_i^\End\}$, and hence exactly one of them in addition to $\{u_i^\mathrm{T}, u_i^\mathrm{F}\}$ is contained in $T$.
Then, due to the degree constraint, we obtain $u_i^\End \in V_T^+$, which forces $u_{i, i+1} \in V_T^-$ and $u_{i+1} \in V_T^+$ (when $i < n$).

Overall, in the variable gadget, there are exactly two possible spanning trees, from which an assignment $a$ with $\psi(a) = 1$ can be reconstructed.
This completes the proof.
\end{proof}

\section{Concluding Remarks}\label{sec:conclusion}
We have investigated two problems on a fusion of two fundamental combinatorial structures, a spanning tree and a perfect matching.

The first problem, finding a minimum-weight spanning tree containing a perfect matching, has been shown as tractable in the very restricted situation when the graph is complete (or complete bipartite) and the edge weights take at most two values.
It, however, becomes NP-hard if we relax one of the two conditions.
For this problem, it seems reasonable to consider nontrivial approximation or fixed-parameter algorithms.

The second problem, testing the existence of a strongly balanced spanning tree, has turned out NP-hard even if the input graph is subcubic and planar.
In the reduction, we have introduced many artificial leaves, which have played the important role to force which vertices should be on which side of the resulting bipartition.
This can be somewhat relaxed by replacing each leaf with a five-vertex gadget as in Figure~\ref{fig:SBST_nonleaf} so that the resulting graph is still subcubic and planar and has no leaf.
An interesting open question is the following: is it possible to strengthen ``subcubic'' to ``cubic''? Or, possibly, is this problem for the cubic graphs in fact tractable?
It seems also interesting to consider the tractability for relatively dense graphs, which tend to have a solution.

\begin{figure}[ht!]
\begin{center}
\includegraphics[scale=0.4]{./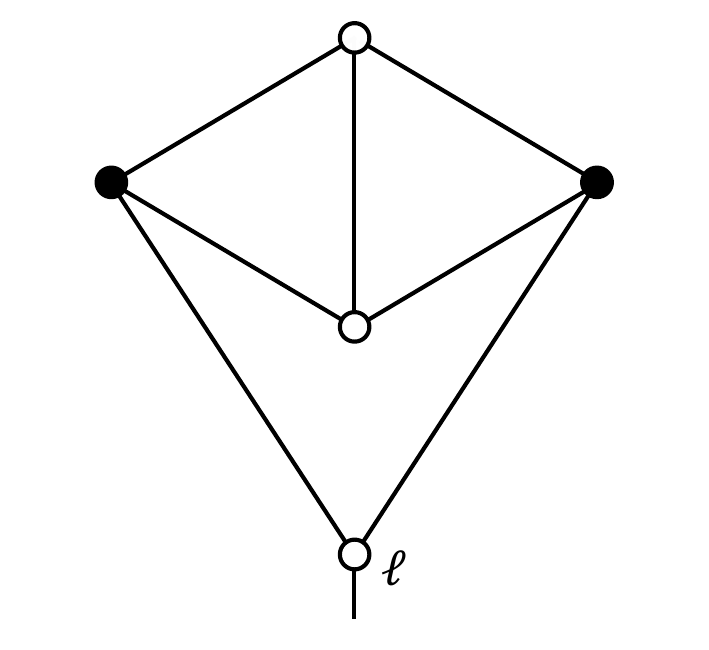}
\caption{An alternative five-vertex gadget for each leaf. The lower vertex $\ell$ corresponds to an original leaf. In any strongly balanced spanning tree $T$ (if exists), the black vertices will be in $V_T^+$ (with degree constraint $2$), and the white vertices will be in $V_T^-$ (without degree constraint), except for exactly one of these gadgets (that corresponds to $s_1$ in Figure~\ref{fig:SBST_path}).}
\label{fig:SBST_nonleaf}
\end{center}
\end{figure}

\bibliographystyle{plain}
\bibliography{main}

\begin{thebibliography}{10}

\bibitem{akiyama1980np}
Takanori Akiyama, Takao Nishizeki, and Nobuji Saito.
\newblock {NP}-completeness of the {H}amiltonian cycle problem for bipartite
  graphs.
\newblock {\em Journal of Information Processing}, 3(2):73--76, 1980.

\bibitem{brezovec1986two}
Carl Brezovec, G\'{e}rard Cornu\'{e}jols, and Fred Glover.
\newblock Two algorithms for weighted matroid intersection.
\newblock {\em Mathematical Programming}, 36(1):39--53, 1986.

\bibitem{DBLP:journals/jco/ChenLZ10}
Lei Chen, Changhong Lu, and Zhenbing Zeng.
\newblock Labelling algorithms for paired-domination problems in block and
  interval graphs.
\newblock {\em Journal of Combinatorial Optimization}, 19(4):457--470, 2010.

\bibitem{Czumaj:Strothmann:ESA:1997}
Artur Czumaj and Willy-Bernhard Strothmann.
\newblock Bounded degree spanning trees.
\newblock In {\em 5th Annual European Symposium on Algorithms (ESA)}, pages
  104--117, 1997.

\bibitem{dijkstra1959note}
Edsger~W. Dijkstra.
\newblock A note on two problems in connexion with graphs.
\newblock {\em Numerische Mathematik}, 1:269--271, 1959.

\bibitem{duan2023faster}
Ran Duan, Hongxun Wu, and Renfei Zhou.
\newblock Faster matrix multiplication via asymmetric hashing.
\newblock In {\em 2023 IEEE 64th Annual Symposium on Foundations of Computer
  Science (FOCS)}, pages 2129--2138. IEEE, 2023.

\bibitem{edmonds1965maximum}
Jack Edmonds.
\newblock Maximum matching and a polyhedron with 0, 1-vertices.
\newblock {\em Journal of Research of the National Bureau of Standards B},
  69(125-130):55--56, 1965.

\bibitem{edmonds1965paths}
Jack Edmonds.
\newblock Paths, trees, and flowers.
\newblock {\em Canadian Journal of Mathematics}, 17:449--467, 1965.

\bibitem{edmonds1979matroid}
Jack Edmonds.
\newblock Matroid intersection.
\newblock {\em Annals of Discrete Mathematics}, 4:39--49, 1979.

\bibitem{frank2011connections}
Andr\'{a}s Frank.
\newblock {\em Connections in Combinatorial Optimization}.
\newblock Oxford University Press, 2011.

\bibitem{frank1998bipartite}
Andr\'as Frank, Eberhard Triesch, Bernhard Korte, and Jens Vygen.
\newblock On the bipartite travelling salesman problem.
\newblock Technical report, Technical Report 98866-OR, Research Institute for
  Discrete Mathematics, 1998.

\bibitem{fujishige2005submodular}
Satoru Fujishige.
\newblock {\em Submodular Functions and Optimization}.
\newblock Elsevier, 2005.

\bibitem{FURER1994409}
Martin Furer and Balaji Raghavachari.
\newblock Approximating the minimum-degree steiner tree to within one of
  optimal.
\newblock {\em Journal of Algorithms}, 17(3):409--423, 1994.

\bibitem{DBLP:conf/focs/Goemans06}
Michel~X. Goemans.
\newblock Minimum bounded degree spanning trees.
\newblock In {\em 47th Annual {IEEE} Symposium on Foundations of Computer
  Science (FOCS)}, pages 273--282. {IEEE}, 2006.

\bibitem{harvey2009algebraic}
Nicholas~J.A. Harvey.
\newblock Algebraic algorithms for matching and matroid problems.
\newblock {\em SIAM Journal on Computing}, 39(2):679--702, 2009.

\bibitem{HASSIN2004343}
Refael Hassin and Asaf Levin.
\newblock Minimum restricted diameter spanning trees.
\newblock {\em Discrete Applied Mathematics}, 137(3):343--357, 2004.

\bibitem{10.1016/0020-0190(94)00183-Y}
Refael Hassin and Arie Tamir.
\newblock On the minimum diameter spanning tree problem.
\newblock {\em Information Processing Letters}, 53(2):109--111, 1995.

\bibitem{DBLP:journals/networks/HaynesS98}
Teresa~W. Haynes and Peter~J. Slater.
\newblock Paired-domination in graphs.
\newblock {\em Networks}, 32(3):199--206, 1998.

\bibitem{iri1976algorithm}
Masao Iri and Nobuaki Tomizawa.
\newblock An algorithm for finding an optimal ``independent assignment''.
\newblock {\em Journal of the Operations Research Society of Japan},
  19(1):32--57, 1976.

\bibitem{kruskal1956shortest}
Joseph~B. Kruskal.
\newblock On the shortest spanning subtree of a graph and the traveling
  salesman problem.
\newblock {\em Proceedings of the American Mathematical Society}, 7(1):48--50,
  1956.

\bibitem{lawler1970optimal}
Eugene~L. Lawler.
\newblock Optimal matroid intersections.
\newblock In {\em Combinatorial Structures and Their Applications}, pages
  233--234. Gorden and Breach, 1970.

\bibitem{lawler1975matroid}
Eugene~L. Lawler.
\newblock Matroid intersection algorithms.
\newblock {\em Mathematical Programming}, 9(1):31--56, 1975.

\bibitem{lichtenstein1982planar}
David Lichtenstein.
\newblock Planar formulae and their uses.
\newblock {\em SIAM Journal on Computing}, 11(2):329--343, 1982.

\bibitem{ma2024enumeration}
Xiaoxu Ma and Yujun Yang.
\newblock Enumeration of spanning trees containing perfect matchings in
  hexagonal chains with a unique kink.
\newblock {\em Applied Mathematics and Computation}, 475:128722, 2024.

\bibitem{micali1980v}
Silvio Micali and Vijay~V. Vazirani.
\newblock An $\mathrm{O}(\sqrt{|V|}|{E}|)$ algorithm for finding maximum
  matching in general graphs.
\newblock In {\em 21st Annual Symposium on Foundations of Computer Science
  (FOCS)}, pages 17--27. IEEE, 1980.

\bibitem{murota2003discrete}
Kazuo Murota.
\newblock {\em Discrete Convex Analysis}.
\newblock SIAM, 2003.

\bibitem{norose2024approximation}
Ryoma Norose and Yutaro Yamaguchi.
\newblock Approximation and {FPT} algorithms for finding {DM}-irreducible
  spanning subgraphs.
\newblock {\em arXiv:2404.17927}, 2024.

\bibitem{prim1957shortest}
Robert~C. Prim.
\newblock Shortest connection networks and some generalizations.
\newblock {\em The Bell System Technical Journal}, 36(6):1389--1401, 1957.

\bibitem{schrijver2003combinatorial}
Alexander Schrijver.
\newblock {\em Combinatorial Optimization: Polyhedra and Efficiency}.
\newblock Springer, 2003.

\bibitem{10.1145/2629366}
Mohit Singh and Lap~Chi Lau.
\newblock Approximating minimum bounded degree spanning trees to within one of
  optimal.
\newblock {\em Journal of the ACM}, 62(1), 2015.

\bibitem{DBLP:journals/algorithmica/SpriggsKBSS04}
Michael~J. Spriggs, J.~Mark Keil, Sergei Bespamyatnikh, Michael Segal, and Jack
  Snoeyink.
\newblock Computing a $(1+\varepsilon)$-approximate geometric minimum-diameter
  spanning tree.
\newblock {\em Algorithmica}, 38(4):577--589, 2004.

\bibitem{DBLP:journals/tcs/TripathiKPPW22}
Vikash Tripathi, Ton Kloks, Arti Pandey, Kaustav Paul, and Hung{-}Lung Wang.
\newblock Complexity of paired domination in {AT}-free and planar graphs.
\newblock {\em Theoretical Computer Science}, 930:53--62, 2022.

\bibitem{vazirani2024theory}
Vijay~V. Vazirani.
\newblock A theory of alternating paths and blossoms from the perspective of
  minimum length.
\newblock {\em Mathematics of Operations Research}, 2024.

\bibitem{vukiveevic2007anti}
Damir Vuki{\v{e}}evi{\'c} and Nenad Trinajsti{\'c}.
\newblock On the anti-forcing number of benzenoids.
\newblock {\em Journal of Mathematical Chemistry}, 42:575--583, 2007.

\bibitem{wu2021graphs}
Baoyindureng Wu and Heping Zhang.
\newblock Graphs where each spanning tree has a perfect matching.
\newblock {\em Contributions to Discrete Mathematics}, 16(3):1--8, 2021.

\end{thebibliography}

\end{document}